\newtheorem{lemma}{Lemma}
\newtheorem{proof}{Proof}
\newtheorem{claim}{Claim}
\def\MdN{\ensuremath{\mathbb{N}}}
\newcommand{\cut}{\mathcal{C}}
\newcommand{\bestcut}{\widehat{\cut}}
\newcommand{\wgt}{\mathcal{W}}
\newcommand{\bestwgt}{\widehat{\wgt}}
\newcommand{\vopt}{\mathcal{V}}
\newcommand{\queue}{\mathcal{Q}}
\newcommand{\Oh}[1]{\mathcal{O}\!\left( #1\right)}
\newcommand{\Is}       {:=}
\newif\ifDoubleBlind
\newcommand{\ie}{i.e.\ }
\newcommand{\etal}{et~al.~}
\newcommand{\CC}{C\texttt{++}}
\newcommand*{\shorterDots}{.\kern-0.06em.\kern-0.06em.} % touching at \kern-0.1725em
\newcommand{\papertitle}{Shared-Memory Branch-and-Reduce for Multiterminal Cuts}
\newcommand{\mytitle}{\papertitle}
\newcommand{\mytitle}{\papertitle \thanks{
    The research leading to these results has received funding from the European Research Council under the European Community's Seventh Framework Programme (FP7/2007-2013) /ERC grant agreement No. 340506}}
\date{}
\author{Double Blind}
\author{Monika Henzinger\thanks{University of Vienna, Faculty of Computer Science, Vienna, Austria} 
\and Alexander Noe \thanks{University of Vienna, Faculty of Computer Science, Vienna, Austria}
\and Christian Schulz \thanks{University of Vienna, Faculty of Computer Science, Vienna, Austria}}
\begin{document}

\title{\mytitle}

\maketitle

\begin{abstract} 

We introduce the fastest known exact algorithm~for~the multiterminal cut problem with $k$ terminals.
In particular, we engineer existing as well as new data reduction rules. 
We use the rules within a branch-and-reduce framework and to boost the performance of an ILP formulation. 
Our algorithms achieve improvements in running time of up to \emph{multiple orders of magnitudes} over the ILP formulation without data reductions, which has been the de facto standard used by practitioners. This allows us to solve instances to optimality that are significantly larger than was previously possible.

\end{abstract}
\ifDoubleBlind
\vfill
\clearpage
\setcounter{page}{1}
\pagenumbering{arabic}
\fancyfoot[C]{\thepage}
\pagestyle{fancy}
\else
\fi{}

\section{Introduction}
We consider the multiterminal cut problem with $k$ terminals. 
Its input is an undirected edge-weighted graph $G=(V,E,w)$ with edge weights $w: E \mapsto \MdN_{>0}$ and its goal is to divide its set of nodes into~$k$ blocks such that each blocks contains exactly one terminal and the weight sum of the edges running between the~blocks~is~minimized. 
The problem has applications in a wide range of areas, for example in multiprocessor scheduling~\cite{DBLP:journals/tse/Stone77}, clustering~\cite{DBLP:journals/njc/PferschyRW94} and bioinformatics~\cite{karaoz2004whole,nabieva2005whole,vazquez2003global}. It is a fundamental combinatorial optimization problem which was first formulated by Dahlhaus~\etal\cite{dahlhaus1994complexity} and Cunningham~\cite{cunningham1989optimal}. It is NP-hard for $k \geq 3$~\cite{dahlhaus1994complexity}, even on planar graphs, and reduces to the minimum $s$-$t$-cut problem, which is in P, for $k=2$. The minimum $s$-$t$-cut problem aims to find the minimum cut in which the vertices $s$ and $t$ are in different blocks. Most algorithms for the minimum multiterminal cut problem use minimum s-t-cuts as a subroutine. Dahlhaus~\etal\cite{dahlhaus1994complexity} give a $2(1-1/k)$ approximation algorithm with polynomial running time. Their approximation algorithm uses the notion of \emph{isolating cuts}, \ie the minimum cut separating a terminal from all other terminals. They prove that the union of the $k-1$ smallest isolating cuts yields a valid multiterminal cut with the desired approximation ratio. The currently best known approximation algorithm by Buchbinder~\etal\cite{buchbinder2013simplex} uses linear program relaxation to achieve an approximation ratio of $1.323$.

While the multiterminal cut problem is NP-hard, it is \emph{fixed-parameter tractable} (FPT), parameterized by the multiterminal cut weight $\wgt(G)$. A problem is fixed-parameter tractable if there is a parameter $\sigma$ so that there is an algorithm with runtime $f(\sigma) \cdot n^{\Oh{1}}$. Marx~\cite{marx2006parameterized} proves that the multiterminal cut problem is FPT and Chen~\etal\cite{chen2009improved} give the first FPT algorithm with a running time of $4^{\wgt(G)}\cdot n^{\Oh{1}}$, later improved by Xiao~\cite{xiao2010simple} to $2^{\wgt(G)}\cdot n^{\Oh{1}}$ and by Cao~\etal\cite{cao20141} to $1.84^{\wgt(G)}\cdot n^{\Oh{1}}$. However, to the best of our knowledge, there is no actual implementation for any~of~these~algorithms. 

The minimum $s$-$t$-cut problem and its equivalent counterpart, the maximum $s$-$t$-flow problem~\cite{ford2015flows} were first formulated by Harris~\etal\cite{harris1955fundamentals}. Ford and Fulkerson~\cite{ford1956maximal} gave the first algorithm for the problem with a running time of $\Oh{mn\wgt}$. One of the fastest known algorithms in practice is the push-relabel algorithm of Goldberg and Tarjan~\cite{goldberg1988new} with a running~time~of~$\Oh{mn\log(n^2/m)}$.

Problems related to the minimum multiterminal cut problem also appear in the data mining community, namely the very similar and heavily studied \emph{seed expansion problem}, for which the aim is to find ground-truth clusters when given a small subset of the cluster vertices. In contrast to the minimum multiterminal cut problem, these clusters might overlap. There is a multitude of approaches adding and removing vertices greedily~\cite{andersen2006communities,clauset2005finding,luo2008exploring,mislove2010you}. PageRank~\cite{page1999pagerank} is reported to be well suited for the problem~\cite{kloumann2014community} and there are multiple approaches that aim to make PageRank perform even better~\cite{andersen2006local,bian2017many,leskovec2010empirical}. Another approach is to use machine learning methods such as geometric~\cite{ye2017learning} or relational~\cite{macskassy2003simple} neighborhood classifiers.

Closely related to the problem is also the minimum cut problem. For this problem, the goal is to divide the set of nodes in an undirected edge-weighted graph into two blocks while minimizing the weight sum of the cut edges. Both Padberg~\etal\cite{padberg1990efficient} and Nagamochi~\etal\cite{nagamochi1992computing,nagamochi1994implementing} give local conditions that are sufficient to contract edges such that the global minimum cut is maintained (and hence the problem size is reduced). An efficient implementation of those conditions is given by Henzinger~\etal\cite{henzinger2017local}. In this work, we adapt the conditions from their works that are applicable to the minimum multiterminal cut problem and use them to reduce the size of the problem.

Our paper has the following \emph{main contributions}: 
We engineer existing as well as new data reduction rules for the minimum multiterminal cut problem with $k$ terminals. 
These reductions are used within a branch-and-reduce framework as well as to boost the performance of an ILP formulation for the problem.
Through extensive experiments we show that kernelization has a significant impact on both, the branch-and-reduce framework as well as the ILP formulation.
Our experiments also show a clear trade-off: combining reduction rules with the ILP is very fast for problems which have a small kernel but a high cut value and the fixed-parameter tractable branch-and-reduce algorithm is highly efficient when the cut value is small.
Overall, we obtain algorithms that are multiple orders of magnitude faster than the ILP formulation which is de facto standard to solve the problem to optimality.
%=====================================================================
\section{Preliminaries}\label{s:preliminaries}

\subsection{Basic Concepts}
Let $G = (V, E, w)$ be a weighted undirected graph with vertex set $V$, edge set $E \subset V \times V$ and
non-negative edge weights $w: E \rightarrow \mathbb{N}$. 
We extend $w$ to a set of edges $E' \subseteq E$ by summing the weights of the edges; that is, $w(E')\Is \sum_{e=(u,v)\in E'}w(u,v)$. %We apply the same notation for single nodes and sets of nodes.
Let $n = |V|$ be the
number of vertices and $m = |E|$ be the number of edges in $G$. The \emph{neighborhood}
$N(v)$ of a vertex $v$ is the set of vertices adjacent to $v$. The \emph{weighted degree} of a vertex is the sum of the weights of its incident edges.
For a set of vertices $A\subseteq V$, we denote by $E[A]\Is \{(u,v)\in E \mid u\in A, v\in V\setminus A\}$; that is, the set of edges in $E$ that start in $A$ and end in its complement.
A \emph{$k$-cut}, or \emph{multicut}, is a partitioning of $V$ into $k$ disjoint non-empty blocks, \ie $V_1 \cup \dots \cup V_k = V$. The weight of a $k$-cut is defined as the weight sum of all edges crossing block boundaries, \ie $w(E \cap \bigcup_{i<j}V_i\times V_j)$.

\subsection{Multiterminal Cuts}

A \emph{multiterminal cut} for $k$ terminals $T = \{t_1,\shorterDots,t_k\}$ is a multicut with~$t_1 \in V_1,\shorterDots,t_k \in V_k$. Thus, a multiterminal cut pairwisely separates all terminals from each other. The edge set of the multiterminal cut with minimum weight of $G$ is called $\cut(G)$ and the associated optimal partitioning of vertices is denoted as $\vopt = \{\vopt_1, \dots, \vopt_k\}$.
 $\cut$ can be seen as the set of all edges that cross block boundaries in $\vopt$, \ie $\cut(G) = \bigcup \{e = (u,v) \mid \vopt_u \neq \vopt_v\}$. The weight of the minimum multiterminal cut is denoted as $\wgt(G) = w(\cut(G))$. At any point in time, the best currently known upper bound for $\wgt(G)$ is denoted as $\bestwgt(G)$ and the best currently known multiterminal cut is denoted as $\bestcut(G)$. If graph $G$ is clear from the context, we omit it in the notation. 
 There may be multiple minimum multiterminal cuts, however, we aim to find one multiterminal cut with minimum weight. %For two vertices $s$ and $t$, we denote $\wgt(G,s,t)$ as the smallest multiterminal cut of $G$, where $s$ and $t$ are on different blocks.

In this paper we use \emph{minimum s-T-cuts}. For a vertex $s$ (\emph{source}) and a non-empty vertex set $T$ (\emph{sinks}), the minimum s-T-cut is the smallest cut in which $s$ is one side of the cut and all vertices in $T$ (except for $s$, if $s\in T$) are on the other side. This is a generalization of minimum s-t-cuts that allows multiple vertices in $t$ and can be easily replaced by a minimum s-t-cut by connecting every vertex in $T$ with a new super-sink by infinite-capacity edges. We denote the capacity of a minimum-s-T-cut, \ie the sum of weights in the smallest cut separating $s$ from $T$, by $\lambda(G,s,T)$.

The example in Figure~\ref{fig:example_mtcut} shows a graph with $4$ terminals. The minimum s-T-cut for each terminal with $T$ being the set of all terminals is shown in red and the minimum multiterminal cut is shown in blue. We can see that any $k-1$ minimum s-T-cuts (in red) separate all terminals and are thus a valid multiterminal cut. 
%When \emph{clustering} a graph, we are looking for \emph{blocks} of nodes $V_1$,\ldots,$V_k$
%that partition $V$, i.e., $V_1\cup\cdots\cup V_k=V$ and $V_i\cap V_j=\emptyset$
%for $i\neq j$. The parameter $k$ is usually not given in advance.
In our algorithm we use \emph{graph contraction} and \emph{edge deletions}.
Given
an edge $e = (u, v) \in E$, we define $G/e$ to be the graph  
after \emph{contracting} $e$.
In the contracted graph, we delete vertex $v$ and all incident edges. For each edge $(v, x) \in E$, we add an edge $(u, x)$
with $w(u, x) = w(v, x)$ to $G$ or, if the edge already exists, we give it the edge
weight $w(u,x) + w(v,x)$. For the \emph{edge deletion} of an edge $e$, we define $G-e$ as the graph $G$ in which $e$ has been removed. Other vertices and edges remain the same.

For a given multiterminal cut $S$, the graph $G\backslash S$ splits $G$ into $k$ blocks as defined by the cut edges in $S$, each containing exactly one terminal. Let the residual $R(t_i)$ be the connected component of $G \backslash S$ containing $t_i$ and $\delta(t_i) = |E(R(t_i), V \backslash R(t_i))|$ be the edges in $S$ incident to $t_i$.

\begin{figure}[t]
   \centering
   \includegraphics[width=.45\textwidth]{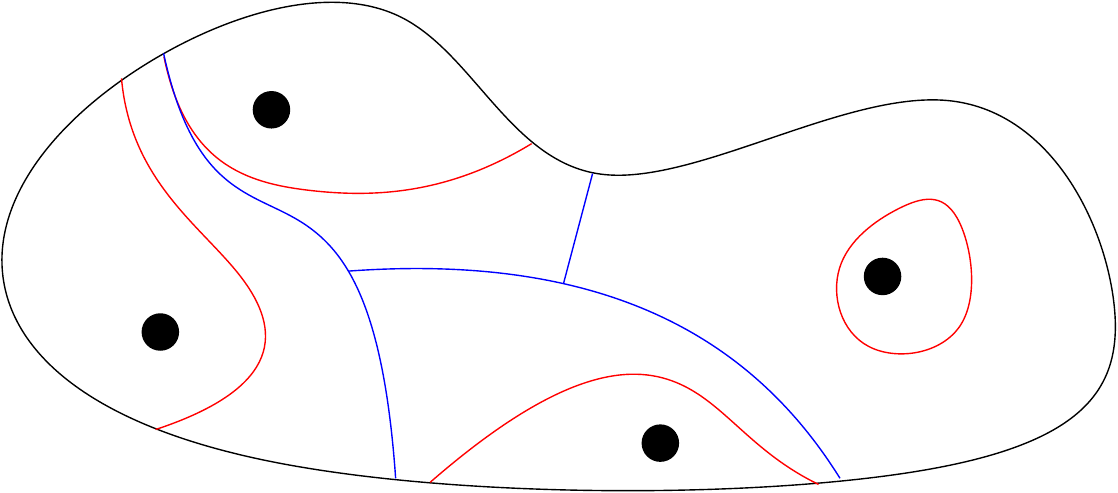}
   \caption{\label{fig:example_mtcut}Graph with $4$ terminals. Minimum $s$-$T$-cut for each terminal shown in red, $\cut$ in blue}
 \end{figure}

\section{Branch and Reduce for Multiterminal Cut}
\label{s:algorithm}

In this section we give an overview of our approach to find the optimal multiterminal cut in large graphs. Our algorithm combines kernelization techniques with an engineered bounded search. 

We begin by finding all connected components of~$G$. We can then look at all connected components independently from each other, as there is a trivial cut of weight $0$ between different connected components. If a connected component contains only one terminal $t$, it can be separated from all other terminals by using the whole connected component as the block $\vopt_t$ belonging to terminal $t$. Due to it being not connected to any other terminals, the cut value is $0$. If a connected component contains no terminals, the result $\wgt$ is identical no matter which block $\vopt$ the connected component belongs to. For a connected component $C$ with two terminals $s$ and $t$, we can run a minimum s-t-cut algorithm on $C$ to find the minimum cut. The optimal blocks $\vopt_s$ and $\vopt_t$ then consist of the two sides of the s-t-cut. On a connected component with more than two terminals, the problem is NP-hard~\cite{dahlhaus1994complexity}. We run our branch and reduce algorithm on this component. As those runs are completely independent, we only look at one connected component in the following and disregard the rest of the graph for now. 

\begin{figure}[t!]
  \centering
  \includegraphics[width=\textwidth]{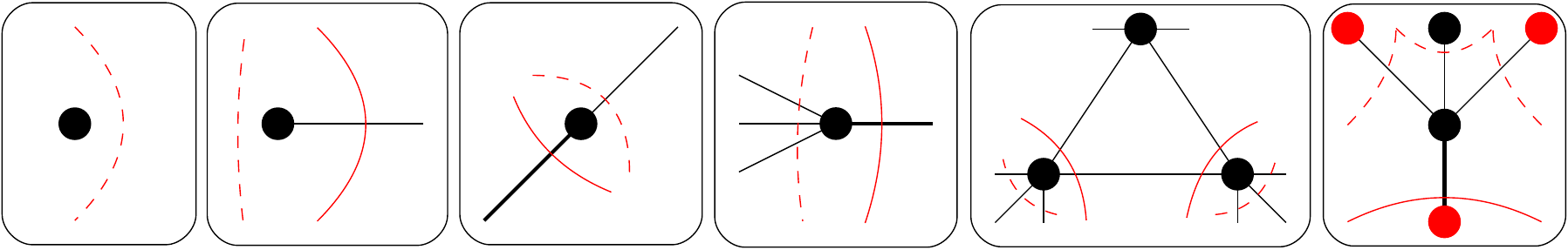}
  \caption{\label{fig:various_reductions} Reductions. Solid line cannot be minimal as dashed line has smaller weight: (1) \texttt{IsolatedVertex}, (2) \texttt{DegreeOne}, (3) \texttt{DegreeTwo} , (4) \texttt{HeavyEdge}, (5)~\texttt{HeavyTriangle} and (6) \texttt{SemiEnclosedVertex}}
\end{figure}

For a graph $G$, we can find an upper bound $\bestwgt$ which is equal to the sum of minimum s-T-cut weights minus the heaviest of them. $\bestwgt(G) = \sum_{s \in T} \lambda(G,s,T\backslash\{s\}) - \arg\,max_{s \in T} \lambda(G,s,T\backslash\{s\})$. However, this is not necessarily the minimum multiterminal cut. 

We can also give a lower bound for the minimum multiterminal cut: as $\lambda(G,s,T\backslash\{s\})$ is by definition minimal, $\cut$ has at least as many edges incident to terminal $s$ as $\lambda(G,s,T\backslash\{s\})$. As this is true for every terminal (and every edge is only incident to two vertices), $\cut(G) \cdot 2 \geq \sum_{s \in T} \lambda(G,s,T\backslash\{s\})$, so that $\cut(G) \geq \sum_{s \in T} \lambda(G,s,T\backslash\{s\}) / 2$.  

In our algorithm, we keep a queue $\queue$ of problems. A problem in $\queue$ consists of a graph $G_{\queue}$, a set of terminals, the upper and lower bound for $\wgt(G_{\queue})$ and the weight sum of all deleted edges in $G_{\queue}$. When our algorithm is initialized, $\queue$ is initialized with a single problem, whose graph is $G$ and whose set of terminals is $T$. The problem has $0$ deleted edges and its lower and upper bound for $\wgt(G)$ can be set as previously described. As the problem is currently the only one, the global upper bound $\hat\wgt(G)$ is equal to the upper bound of $G$. Over the course of the algorithm, we repeatedly take a problem from $\queue$ and check whether we can reduce the graph size using our kernelization techniques outlined in Section~\ref{ss:kernel}. When possible, we perform the kernelization and push the kernelized problem to $\queue$. Otherwise, we branch on an edge $e$ adjacent to one of the terminals. 

The kernelization techniques detailed in Section~\ref{ss:kernel} reduce the size of the graph by finding edges that are (1) either guaranteed to be in a minimum multiterminal cut or (2) guaranteed not to be part of at least one minimum multiterminal cut. As we only want to find a single multiterminal cut with minimum sum of edge weights, we can delete edges in (1) and contract edges in (2). 

In Section~\ref{ss:branch} we detail the branching procedure which is used if these reduction techniques are unable to find any further reduction possibilities. For any edge~$e$, either it is in the multiterminal cut or it is not. We create two subproblems for $G$: $G/e$ and $G-e$. We aim to find the minimum multiterminal cut on either. Further details on the branching and edge selection are given in Section~\ref{ss:branch}. We compute upper and lower bounds for each of the problems and follow the branches whose lower bounds are lower than $\bestwgt$, the best cut weight previously found. In Section~\ref{ss:queue_impl} we discuss queue implementation and whether using a priority queue to first process 'promising' problems is useful in practice. We employ shared-memory parallelism by having multiple threads pull problems from $\queue$.

\section{Kernelization}
\label{ss:kernel}

We now show how to reduce the size of our graph to make the problem more manageable. This is achieved by contracting edges that are guaranteed not to be in the minimum multiterminal cut and deleting edges that are guaranteed to be in it. Before we detail the kernelization rules we show that edges not in $\cut$ can be safely contracted and edges in $\cut$ can be safely deleted if we store the weight sum of all deleted edges so far. The kernelization rules given in the following and outlined in Figure~\ref{fig:various_reductions} are used to identify such edges.

\begin{lemma}\label{lem:cont} \cite{cao20141}
  If an edge $e = (u,v) \in G$ is guaranteed not to be in at least one multiterminal cut $\cut(G)$ (\ie $P_u = P_v$), we can contract $e$ and $\wgt(G/e) = \wgt(G)$.
\end{lemma}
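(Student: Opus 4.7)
The plan is to prove equality of $\wgt(G/e)$ and $\wgt(G)$ by establishing the two inequalities separately, using the characterization that each multiterminal cut corresponds to a partition of the vertex set into $k$ blocks, one per terminal.

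For the direction $\wgt(G/e) \le \wgt(G)$, I would start from the hypothesis: there exists an optimal multiterminal cut of $G$, with optimal partition $\vopt = \{\vopt_1, \ldots, \vopt_k\}$, in which $u$ and $v$ belong to the same block, say $\vopt_i$. Because $u$ and $v$ are on the same side of this cut, the edge $e$ itself is not cut. Contracting $e$ replaces $u$ and $v$ by a single vertex $uv$; define a partition of $V(G/e)$ by leaving all blocks $\vopt_j$ ($j\neq i$) unchanged and replacing $u, v$ in $\vopt_i$ with the new vertex $uv$. Every terminal is still in a distinct block, so this is a valid multiterminal cut of $G/e$. Its weight equals the weight in $G$ of all cut edges other than $e$, which is precisely $\wgt(G)$ because $e$ was not cut. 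Hence $\wgt(G/e) \le \wgt(G)$.

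For the direction $\wgt(G/e) \ge \wgt(G)$, I would take any minimum multiterminal cut of $G/e$ with partition $\vopt' = \{\vopt'_1, \ldots, \vopt'_k\}$ and lift it to a partition of $V(G)$: put $u$ and $v$ into whichever block contains the contracted vertex $uv$, and keep all other vertices as they are. This yields a valid multiterminal cut of $G$ (terminals are unchanged and remain separated). The cut edges are exactly the same set as in $G/e$, with identical weights: edges incident to $uv$ in $G/e$ correspond bijectively (after the contraction rule of combining parallel edges into a single edge with summed weights) to edges incident to $u$ or $v$ in $G$, and an edge is cut in the lifted partition iff the corresponding edge or merged pair of edges was cut in $G/e$. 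In particular, $e=(u,v)$ is uncut since $u$ and $v$ share a block. Hence $\wgt(G) \le \wgt(G/e)$.

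The main subtlety I foresee is bookkeeping around the contraction rule: when both $u$ and $v$ have a common neighbor $x$, the two edges $(u,x)$ and $(v,x)$ are merged into a single edge $(uv,x)$ whose weight is the sum of the original weights. I need to verify that in both directions of the argument the total weight of cut edges is preserved under this merging — i.e., that $(uv,x)$ is cut in $G/e$ exactly when at least one of $(u,x),(v,x)$ is cut in $G$, and in that case both must be cut (since $u$ and $v$ lie in the same block), so the summed weight matches. Once this is made precise, the two inequalities combine to give $\wgt(G/e) = \wgt(G)$, completing the proof.
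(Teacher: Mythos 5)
Your proof is correct and follows essentially the same route as the paper's: projecting the optimal partition of $G$ (in which $u$ and $v$ share a block) down to $G/e$ for one inequality, and lifting an optimal cut of $G/e$ back to $G$ for the other (the paper phrases the latter as ``contraction only removes cuts and does not create new ones''). Your explicit bookkeeping of merged parallel edges is a welcome addition of rigor but does not change the argument.
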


\begin{proof}
  As $e \not\in \cut(G)$, $\cut(G/e)$ is equal to $\cut(G)$ and thus still has weight equal to $w(\cut(G)) = \wgt(G)$. As an edge contraction only removes cuts and does not create any new cuts, an edge contraction can not lower the weight of the minimum multiterminal cut, \ie $\wgt(G/e) \geq \wgt(G)$. As $\cut(G/e)$ has weight $\wgt(G)$, it is a multiterminal cut in $G/e$ with weight equal to $\wgt(G)$. Thus it is definitely a minimum multiterminal cut with weight $\wgt(G)$.
\end{proof}

Lemma~\ref{lem:cont} allows us to reduce the graph size by contracting an edge if we can prove that both incident vertices are in the same partition in $\vopt$. The lemma can be generalized trivially to contract a connected vertex set by applying the lemma to each edge connecting two vertices of the set.

\begin{lemma}\label{lem:del} \cite{cao20141}
  If an edge $e = (u,v) \in E$ is guaranteed to be in a minimum multiterminal cut, \ie there is a minimum multiterminal cut $\cut(G)$ in which $P_u \neq P_v$, we can delete $e$ from $G$ and $\cut(G - e)$ is still a valid minimum multiterminal cut.
\end{lemma}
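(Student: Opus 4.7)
The plan is to show the stronger quantitative statement $\wgt(G-e) = \wgt(G) - w(e)$, from which the lemma follows: a minimum multiterminal cut of $G-e$ together with $e$ yields a minimum multiterminal cut of $G$, so solving the problem on $G-e$ (and bookkeeping $w(e)$ among the deleted edges, as the algorithm already does) is sound.

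First I would use the hypothesis: there exists some minimum multiterminal cut $\cut(G)$ of $G$ with $e \in \cut(G)$, associated to a partition $\vopt$ with $\vopt_u \neq \vopt_v$. Since removing the edge $e$ from $G$ changes neither $V$ nor which edges cross which pair of blocks (other than $e$ itself), the same partition $\vopt$ is still a valid multiterminal cut of $G-e$, and its cut edge set is exactly $\cut(G)\setminus\{e\}$. This already gives the inequality $\wgt(G-e) \leq \wgt(G) - w(e)$.

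Next I would establish the reverse inequality by taking any minimum multiterminal cut $\cut^*$ of $G-e$, with associated partition $P$, and lifting it back to $G$. The partition $P$ still separates the terminals in $G$, so its crossing edge set in $G$ is a valid multiterminal cut of $G$. Here I would case-split on whether $e$ crosses $P$: if $P_u \neq P_v$, then the crossing edges in $G$ are $\cut^* \cup \{e\}$, giving $\wgt(G) \leq \wgt(G-e) + w(e)$; if $P_u = P_v$, the crossing edges are exactly $\cut^*$, giving $\wgt(G) \leq \wgt(G-e)$, which combined with the first inequality yields $\wgt(G) \leq \wgt(G) - w(e)$, impossible since edge weights are in $\MdN_{>0}$. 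Thus the first case must occur, and we conclude $\wgt(G-e) \geq \wgt(G) - w(e)$.

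Putting the two inequalities together gives $\wgt(G-e) = \wgt(G) - w(e)$ and shows that for any minimum multiterminal cut $\cut^*$ of $G-e$, the edge $e$ necessarily crosses its partition, so $\cut^* \cup \{e\}$ is a minimum multiterminal cut of $G$. The only delicate point, and the one I would be careful about, is the second case of the case-split: it is where the hypothesis that $e$ lies in \emph{some} minimum cut, together with the strict positivity $w(e) \geq 1$, is actually used. Apart from that, the argument is a direct mirror of the contraction proof in Lemma~\ref{lem:cont}.
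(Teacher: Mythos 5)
Your proposal is correct and follows essentially the same route as the paper: both establish the key identity $\wgt(G-e) = \wgt(G) - w(e)$ via the same two inequalities (drop $e$ from the optimal partition of $G$ for the upper bound; observe that restoring $e$ can raise any cut's weight by at most $w(e)$ for the lower bound). Your explicit case-split on whether $e$ crosses the optimal partition of $G-e$, using $w(e)\geq 1$ to rule out the non-crossing case, is a slightly more careful rendering of the paper's one-line "the weight of any multiterminal cut can be decreased by at most $w(e)$," but it is the same argument.
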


\begin{proof}
  Let $\wgt(G)$ be the weight of the minimum multiterminal cut $\cut(G)$. We show that for an edge $e \in \cut(G)$, $\wgt(G - e) = \wgt(G) - w(e)$. Thus, we can delete $e$ (and thus replace $G$ with $G - e$) and store the weight of the deleted edge. Obviously, $\cut(G - e)$ has weight equal to $\wgt(G) - w(e)$, as we just deleted $e$ and all other edges in $\cut(G)$ are still in $G$. By deleting $e$, the weight of any multiterminal cut can be decreased by at most $w(e)$ (as a multiterminal cut is a set of edges and $e$ can at most be once in that set). As $\wgt(G)$ is minimal by definition and no cut weight can be decreased by more than $w(e)$, $G - e$ cannot have a minimum multiterminal cut with weight $< \wgt(G) - w(e)$. Thus, $\cut(G-e)$ is a minimum multiterminal cut of $G-e$ with weight $\wgt(G-e)$.
\end{proof}

\paragraph{Minimum Isolating Cuts}
When we look at a problem, we first solve the minimum s-T-cut problem for each terminal $s \in T$. This results in one or multiple minimum cuts that separate $s$ from all other terminals. We call the side of the cut containing $s$ the \emph{isolating cut} of $s$. Dahlhaus~\etal\cite{dahlhaus1994complexity} prove that there is a minimum multiterminal cut $\cut$ in which the complete isolating cut is in $\vopt_s$. Thus, according to Lemma~\ref{lem:cont} we can contract all vertices of the largest isolating cut into a single vertex. In Figure~\ref{fig:example_mtcut} this would result in contracting the red areas into their respective terminals. This contraction might result in edges connecting terminals. Such an edge $e=(u,v)$, where both $u$ and $v$ are terminal vertices is guaranteed to be a part of $\cut(G)$. This comes from the fact that we know $\vopt_u \neq \vopt_v$, \ie $u$ and $v$ are not in the same block in the minimum multiterminal cut, as both $u$ and $v$ are terminals. According to Lemma~\ref{lem:del} they can therefore be deleted.

\subsection{Local Contraction}
We aim to find edges that cannot be part of the minimum multiterminal cut. If we find an edge that can be contracted, we mark it in a union find data structure~\cite{gabow1985linear}. This union-find structure is initialized with each vertex as its own block, an edge contraction then merges the two blocks of incident vertices. After all kernelization criteria are tested, we contract all edges that are marked as contractible. As a contraction might open up new contractions in its neighborhood, we run the contraction routines until they do not find any more contractible edges. To ensure low overhead, we run only the first iteration completely and subsequently check only the neighborhoods of vertices that were changed in the previous iteration.

\paragraph{Low-Degree Vertices~\cite{cao20141}}
Figures~\ref{fig:various_reductions}.(1),~\ref{fig:various_reductions}.(2) and \ref{fig:various_reductions}.(3) show examples of why non-terminal vertices with degree $\leq 2$ can be contracted. A non-terminal vertex with no neighbors (\texttt{IsolatedVertex}) can be deleted as there is no incident edge that could affect a cut. For a non-terminal vertex $v$ with only one adjacent edge $e = (v,x)$ (\texttt{DegreeOne}), $e$ can not be part of the minimum multiterminal cut $\cut(G)$. Any multiterminal cut that contains $e$ can be improved by removing $e$ and moving $v$ to the block of its neighbour $x$. Thus, we can contract $e$. On a non-terminal vertex with two adjacent edges $e_1$ and $e_2$ (\texttt{DegreeTwo}), the heavier edge $e_1$ can not be part of $\cut$, as replacing it with $e_2$ improves the cut value. If $e_1$ and $e_2$ have equal weight, we can contract either (but not both!). These reductions are performed in a single run, which we denote as \texttt{Low}.

\paragraph{Heavy Edges}
We now look to contract heavy edges. \texttt{HeavyEdge}~(\ref{fig:various_reductions}.(4)) and \texttt{HeavyTriangle}~(\ref{fig:various_reductions}.(5)) are reductions that were originally used for the minimum cut problem~\cite{Chekuri:1997:ESM:314161.314315,henzinger2018practical,padberg1990efficient}. We adapt them and transfer them to the minimum multiterminal cut problem. 

\texttt{HeavyEdge} says that an edge $e=(u,v)$ which has a weight of at least half of the total edge degree of a non-terminal vertex $u$ can be contracted, as any cut containing $e$ can instead also contain all other edges incident to $u$. If $e$ has at least $\frac{deg(u)}{2}$, all other incident edges together are not heavier.

For a \texttt{HeavyTriangle} with vertices $v_1$, $v_2$ and $v_3$, we can relax the condition. If for two of the vertices the incident triangle edges together are at least as heavy as all other incident edges, we can contract those, as shown in Figure~\ref{fig:various_reductions}.(5). Each of the continuous lines between $v_1$ and $v_2$ can be replaced with the dashed line without increasing the value of the cut. Thus, in every case ($v_3$ can be on either side of the cut), there is an optimal solution in which $v_1$ and $v_2$ are in the same block. Thus, we can contract the edge according to Lemma~\ref{lem:cont}.

\begin{figure*}[t]
  \centering
  \includegraphics[width=.9\textwidth]{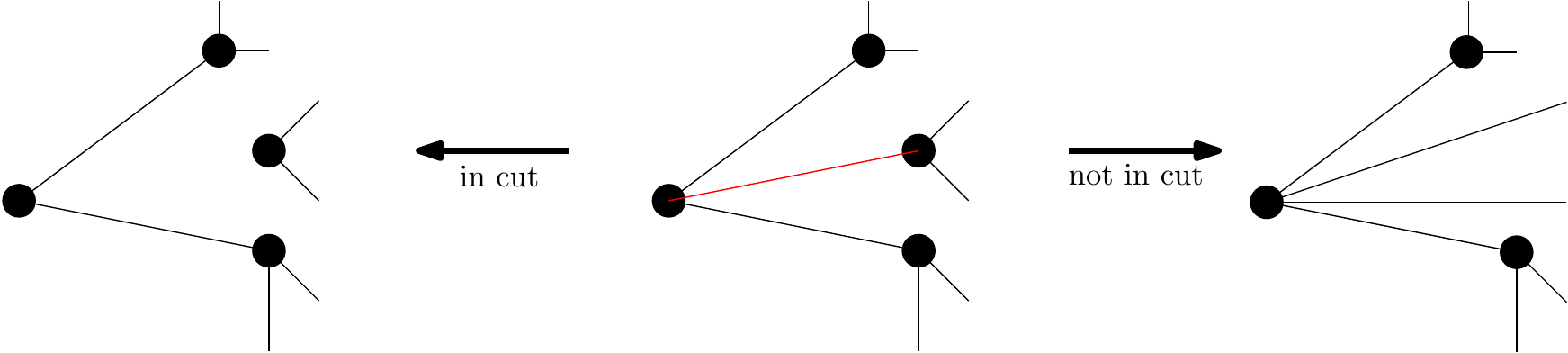}
  \caption{\label{fig:branch_edge}Branch on marked edge $e$ in $G$, adjacent to a terminal - create two subproblems, (1) $G/e$ and (2) $G-e$.}
\end{figure*}

The condition \texttt{SemiEnclosed}, shown in Figure~\ref{fig:various_reductions}.(6), considers a vertex $v$ which is mostly incident to terminal vertices. Let $t_1$ be the terminal that is most strongly connected to $v$ and $t_2$ the terminal with second highest connection strength. Now say that $v$ is contracted into any terminal vertex. All edges connecting $v$ with other terminals are then edges connecting terminals and are guaranteed to be in $\cut$. If $w(v, t_1) > w(v, t_2) + \sum_{u \in V \backslash T} w(v, u)$, \ie $(v, t_1)$ is heavier than the sum of $(v, t_2)$ and all edges connecting $v$ with non-terminals, we can contract $v$ into $t_1$. This follows from the fact that the weight of cut edges incident to $v$ is at most $deg(v) - w(v, t_1)$ if $v$ is in the same block as $t_1$. If we instead add $v$ to the block of $t_2$ (or any other block), at most $w(v, t_2) + \sum_{u \in V \backslash T} w(v, u)$ of the edges incident to $v$ would not be part of the cut. Thus, the locally best choice is contracting $v$ into $t_1$. As this does not affect any other graph areas, this choice is guaranteed to be optimal. We check both \texttt{HeavyEdge} and \texttt{SemiEnclosed} in a single run labelled \texttt{High}. \texttt{HeavyTriangle} is checked in a run named \texttt{Triangle}.

\subparagraph{High-connectivity edges}

The \emph{connectivity} of an edge $e=(u,v)$ is the value of the minimum cut separating $u$ and $v$. If an edge has connectivity $\geq \bestwgt(G)$, it is guaranteed that $u$ and $v$ are in the same block in $\vopt$, as there can not be a multiterminal cut that separates them and has value $<\bestwgt(G)$. We can therefore contract $u$ and $v$. We now show how to improve the bound.

\begin{lemma}\label{lem:noi}
  If for a graph $G$ with best known multiterminal cut $\bestcut(G)$, vertices $u$ and $v$ belong to different connected components of the minimum multiterminal cut $G\backslash \cut$, then $\lambda(u,v)+\frac{\sum_{i \in \{1,\dots,t\}\backslash \max_2} \lambda(G,t_i,T\backslash\{t_i\})}{4}  \leq |\wgt(G)|$, where $\max_2$ is the set of the indices of the largest $2$ values $\lambda(G,t_i,T\backslash\{t_i\})$ in the sum. 
\end{lemma}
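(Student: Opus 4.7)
The plan is to lower-bound $\wgt(G)$ by combining a $u$-$v$ cut inequality with the isolating-cut inequalities applied to each block of the optimum partition. Let $\vopt_1,\dots,\vopt_k$ be the partition induced by $\cut=\cut(G)$, and let $u\in\vopt_i$, $v\in\vopt_j$ with $i\neq j$ (guaranteed by the hypothesis that $u,v$ lie in different components of $G\setminus\cut$). For each block index $\ell$, let $A_\ell$ be the total weight of edges of $\cut$ having exactly one endpoint in $\vopt_\ell$. Double-counting each cut edge from both endpoints gives the identity $\sum_\ell A_\ell = 2\wgt(G)$, which serves as the accounting backbone of the proof.

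Next, I derive two types of lower bounds on individual $A_\ell$. First, since the edges counted by $A_\ell$ separate $\vopt_\ell$ (containing $t_\ell$) from $V\setminus\vopt_\ell$ (containing every other terminal), they form an $s$-$T$-separating cut for $s=t_\ell$ and sinks $T\setminus\{t_\ell\}$, so $A_\ell \geq \lambda(G,t_\ell,T\setminus\{t_\ell\})$. Second, for $\ell=i$ the same edges separate $u$ from $v\in V\setminus\vopt_i$, yielding $A_i\geq\lambda(u,v)$; symmetrically $A_j\geq\lambda(u,v)$. Substituting the $\lambda(u,v)$ bound into $A_i$ and $A_j$ and the isolating-cut bound into the remaining $A_\ell$, then plugging into the double-counting identity, produces
\[
2\wgt(G) \;\geq\; 2\lambda(u,v) \;+\; \sum_{\ell\notin\{i,j\}} \lambda(G,t_\ell,T\setminus\{t_\ell\}).
\]

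Finally, since the pair $(i,j)$ is unknown a priori, the inequality must hold for the worst-case choice, namely the pair of indices that minimizes the remaining sum; this is exactly $\{i,j\}=\max_2$. Dividing through by $2$ yields $\wgt(G)\geq\lambda(u,v)+\tfrac{1}{2}\sum_{\ell\notin\max_2}\lambda(G,t_\ell,T\setminus\{t_\ell\})$, which is strictly stronger than the claimed bound with denominator $4$ and hence implies it. The main obstacle is the correct pairing of lower bounds with blocks so that no edge weight is accounted for twice, together with the worst-case reduction that forces the exclusion of the two largest isolating-cut values; a secondary issue is ensuring the hypothesis ``different components of $G\setminus\cut$'' can be strengthened to ``different blocks of $\vopt$'', which is needed for $A_i\geq\lambda(u,v)$ and $A_j\geq\lambda(u,v)$ to apply.
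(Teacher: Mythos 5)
Your proof is correct, but it takes a genuinely different and in fact sharper route than the paper. The paper's proof (Appendix~\ref{app:proofs}) works in the quotient graph obtained by contracting each block of $\vopt$ into a single vertex: it bounds $\lambda(u,v)$ from \emph{above} by a minimum $u$-$v$ cut in that quotient graph, argues that every contracted vertex other than the two containing $u$ and $v$ contributes at most half of its degree to that cut (else it could be moved to the other side), and then combines the resulting bound $\lambda(u,v) \leq \tfrac{1}{4}\sum_i \delta(R(t_i)) + \tfrac{1}{4}\bigl(\delta(R(u))+\delta(R(v))\bigr)$ with the double-counting identity $\sum_i \delta(R(t_i)) = 2\wgt(G)$. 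You instead bound each block boundary from \emph{below}: the boundary of every block is an isolating cut for its terminal, and the boundaries of the two blocks containing $u$ and $v$ each additionally separate $u$ from $v$, giving $2\wgt(G) \geq 2\lambda(u,v) + \sum_{\ell\notin\{i,j\}}\lambda(G,t_\ell,T\setminus\{t_\ell\})$ directly, with the worst case over the unknown pair $\{i,j\}$ handled by $\max_2$. This avoids the quotient-graph detour entirely and yields the inequality with denominator $2$ rather than $4$, a strictly stronger statement that would permit a more aggressive \texttt{HighConnectivity} contraction test; the paper's "move a vertex across the cut" argument is the source of the extra factor of $2$ it loses. The one point you flag but do not resolve -- that the hypothesis speaks of connected components of $G\setminus\cut$ while your bounds $A_i,A_j\geq\lambda(u,v)$ need $u$ and $v$ in different \emph{blocks} -- is a real gap shared with the paper's own proof (which silently identifies $R(u)$ and $R(v)$ with blocks of the partition); it can be closed by replacing the block boundaries with the boundaries of the components $C_u$ and $C_v$ themselves, which consist solely of cut edges, each separate $u$ from $v$, and are edge-disjoint even when $C_u$ and $C_v$ lie in the same block, so the same double-counting still delivers the $2\lambda(u,v)$ term.
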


\begin{proof}
In Appendix~\ref{app:proofs}
\end{proof}

We can use Lemma~\ref{lem:noi} to contract high-connectivity edges. This condition is denoted as \texttt{HighConnectivity}. For any edge $e=(u,v)$, if $\lambda(u,v)+\frac{\sum_{i \in \{1,\dots,k\}\backslash \max_2} \lambda(G,t_i,T\backslash\{t_i\})}{4} > |\wgt| \geq |\bestwgt|$, $u$ and $v$ are guaranteed to be in the same block in $\vopt$. Thus, we can contract them into a single vertex according to Lemma~\ref{lem:cont}.

As it is very expensive to compute the connectivity for every edge, we use the CAPFOREST algorithm of Nagamochi~\etal~\cite{henzinger2019shared,nagamochi1992computing,nagamochi1994implementing} to compute a connectivity lower bound $\gamma(u,v)$ for each edge $e = (u,v)$ in $G$ in near-linear time. If the lower bound $\gamma(u,v)$ fulfills Equation~\ref{eq:cap}, we can use Lemma~\ref{lem:noi} to contract $u$ and $v$.

\begin{equation}
  \label{eq:cap} \gamma(u,v) > |\hat\wgt| - \frac{\sum_{i \in \{1,\dots,k\}\backslash \max_2} \lambda(G,t_i,T\backslash\{t_i\})}{4}
\end{equation}

\section{Branching Tree Search}
\label{ss:branch}

If our reductions detailed in Section~\ref{ss:kernel} are unable to contract any edges in $G$, we branch on an edge adjacent to a terminal.  Figure~\ref{fig:branch_edge} shows an example in which we chose an edge to branch on. For each edge, there are two options: either the edge is part of the minimum multiterminal cut $\cut(G)$ or it is not. Lemmas~\ref{lem:cont}~and~\ref{lem:del} show that we can delete an edge that is in $\cut(G)$ and contract an edge that is not. Therefore we can build two subproblems, $G/e$ and $G-e$ and add them to the problem queue $\queue$.

Both of the subproblems will have a higher lower bound and thus, the algorithm will definitely terminate. For $G-e$, we know that $e$ is adjacent to a terminal $s$ but not an edge connecting two terminals (otherwise it would have been deleted). Thus, it is in exactly one minimum s-T-cut $\lambda(G,s,T\backslash\{s\})$. For the lower bound, we half the value of all minimum s-T-cuts. Deleting the edge indicates that it is definitely part of the multiterminal cut. Thus, we increased the lower bound by $w(e) - \frac{w(e)}{2} = \frac{w(e)}{2}$. 

For $G/e$ we know that $e=(s,v)$ is part of the largest isolating cut of $s$ (as we contract the largest isolating cut). In $G/e$ terminal $s$ is guaranteed to have a larger minimum s-T-cut, as otherwise there would be an isolating cut of equal value containing $v$, which contradicts the maximality of the contracted isolating cut. Thus $\lambda(G/e,s,T\backslash\{s\}) > \lambda(G,s,T\backslash\{s\})$ and no other minimum s-T-cut can be decreased by an edge contraction. Thus, the lower bound of $\wgt(G/e)$ and $\wgt(G-e)$ are both guaranteed to be higher than the lower bound of $\wgt(G)$.

\paragraph{Edge Selection}

In Section~\ref{ss:branch_impl} we evaluate the following edge selection strategies: \texttt{HeavyEdge} branches on the heaviest edge incident to a terminal; \texttt{HeavyVertex} branches on the edge between the heaviest vertex that is in the neighborhood of a terminal to that terminal; \texttt{Connection} searches the vertex that is most strongly connected to the set of terminals and branches on the heaviest edge connecting it to a terminal; \texttt{NonTerminalWeight} branches on the edge between the vertex that has the highest weight sum to non-terminal vertices and the terminal it is most strongly connected with; and \texttt{HeavyGlobal} branches on the heaviest edge in the~graph.  

\paragraph{Sub-problem order}

In Section~\ref{ss:queue_impl} we evaluate the following comparators for the priority queue $\queue$, \ie the order in which we look at the problems. A straightforward indicator on whether a problem can lead to a low cut is the current lower and upper bound for the best solution. If a problem has a good lower bound, it has a large potential for improvement and if it has a good upper bound there is already a good solution, potentially close to an even better solution in the neighborhood. Thus, \texttt{LowerBound} orders the problems by their lower bound and solves the ones with a better lower bound first while \texttt{UpperBound} first examines problems with a lower bound. In either comparator, the respective other bound acts as a tie breaker. \texttt{BoundSum} orders problems by the sum of their upper and lower bound.

\texttt{BiggerDistance} first examines problems in which the distance between lower and upper bound is very large. The conceptual idea is that those problems still have many unknowns and thus could be interesting to examine. In contrast to that, \texttt{LowerDistance} first examines problems with a lower distance of upper and lower bound, as those branches will likely have fewer subbranches. Following the same idea, \texttt{MostDeleted} first explores the problem that has the highest deleted weight. \texttt{SmallerGraph} orders the graphs by the number of vertices and first examines the smallest graph. As over the course of the algorithm a terminal might become isolated (as all incident edges were deleted), not all problems have the same amount of terminals. The isolated terminals are inactive and thus do not need any more flow computations. \texttt{FewTerminals} first examines problems with a lower number of active terminals. As there are many solutions with the same amount of terminals, ties are broken using \texttt{LowerBound}.

\section{Parallel Branch and Reduce}

Our algorithm is shared-memory parallel. As we maintain a queue of problems which are independent from each other, we can run our algorithm embarassingly parallel. The shared-memory priority queue of problems is implemented as a separate queue for each thread to pull from. When a thread adds a problem to the priority queue, it is added to a random queue with minimum queue size. In order to exploit data and cache locality, we add problems to the queue of the local thread if it is one of the queues with minimum size. Additionally, we fix each thread to a single CPU thread in order to actually use those locality benefits. In the beginning of the algorithm, there is only a single problem, which would leave all except for one processors idle, potentially for a long time, as we have to solve $k$ flow problems on the whole (potentially very large) graph. Thus, if there are idle processors, we distribute the flow problems over different threads.

\section{Combining Kernelization with ILP}

Multiterminal cut problems are generally solved in practice using integer linear programs~\cite{nabieva2005whole}. The following ILP formulation is adapted from \cite{henzinger2018ilp} and implemented using Gurobi 8.1.1. It is functionally equal to \cite{nabieva2005whole}.

  \begin{equation*}
\hspace*{-4cm}    \min \sum_{\{u,v\} \in E} e_{uv} \cdot w(\{u,v\})
  \end{equation*}

  \begin{equation*}
\hspace*{-2.5cm} \forall \{u,v\} \in E, \forall k: e_{uv} \geq x_{u,k} - x_{v,k}\\
  \end{equation*}
  \begin{equation*}
 \hspace*{-2.5cm}\forall \{u,v\} \in E, \forall k: e_{uv} \geq x_{v,k} - x_{u,k}\\
  \end{equation*}
  \begin{equation*}
 \hspace*{-4.5cm}   \forall v \in V: \sum_k x_{v,k} = 1\\
  \end{equation*}
  \begin{equation*}
  \hspace*{-4.65cm}  \forall i,j:  x_{t_i,j} = [ i = j ] \\
  \end{equation*}

\noindent Here, $x_{u,k}$ is $1$ iff vertex $u$ is in $V_k$ and $0$ otherwise and $e_{uv}$ is $1$ iff $(u,v)$ is a cut edge. 
We use this ILP formulation as a baseline of comparison. Additionally, we also create a new algorithm that combines the kernelization of our algorithm with integer linear programming. Using flow computations and kernelization routines, we are able to significantly reduce the size of most graphs while still preserving the minimum multiterminal cut. As the complexity of the ILP depends on the size of the graph and the complexity of the branch-and-reduce algorithm also depends on the value of the cut, this is fast on graphs with a high cut value in which the kernelization routines can reduce the graph to a very small size but with a large cut value. In the following, our algorithm \texttt{Kernel+ILP} first runs kernelization until no further reduction is possible and then solves the problem using the above integer linear programming formulation.

\begin{table}[hb!] \centering
  \small
   \caption{\label{t:ppigraphs}Large Real-world Benchmark Instances}
   \begin{tabular}{| l | r | r |}
     \hline
     Graph & $n$& $m$\\
     \hline \hline
     \multicolumn{3}{|l|}{Section~\ref{ss:social}: Social, Web and Map Graphs}\\
     \hline \hline
     bcsstk30 \cite{soper2004combined} & \numprint{28924} & $1.01M$\\
     ca-2010 \cite{bader2013graph} & $710K$ & $1.74M$\\
     ca-CondMat \cite{davis2011university} & \numprint{23133} & \numprint{93439}\\
     cit-HepPh \cite{davis2011university} & \numprint{34546} & $422K$\\
     eu-2005 \cite{BoVWFI} & $862K$ & $16.1M$\\
     higgs-twitter \cite{davis2011university} & $457K$ & $14.9M$\\
     in-2004 \cite{BoVWFI} & $1.38M$ & $13.6M$\\
     ny-2010 \cite{bader2013graph} & $350K$ & $855K$ \\
     uk-2002 \cite{BoVWFI} & $18.5M$ & $261M$ \\
     vibrobox \cite{soper2004combined} & \numprint{12328} & $165K$\\     
     \hline \hline
     \multicolumn{3}{|l|}{Section~\ref{ss:ppi}: Protein-protein Interaction}\\
     \hline \hline
     Acidithiobacillus ferrivorans & \numprint{3093} & \numprint{5394}\\
     Agaricus bisporus & \numprint{11271} & \numprint{14636} \\
     Candida maltosa & \numprint{5948} & \numprint{19462} \\
     Escherichia coli & \numprint{4127} & \numprint{13488} \\
     Erinaceus europaeus & \numprint{19578} & \numprint{68066} \\
     Homo sapiens & \numprint{19566}& $324K$ \\
     Mesoplasma florum & \numprint{683} & \numprint{2365} \\
     Saccharomyces cerevisiae & \numprint{6691} & \numprint{69809} \\
     Toxoplasma gondii & \numprint{7988} & \numprint{11779} \\
     Vitis vinifera & \numprint{29697} & \numprint{70206} \\
     \hline    
   \end{tabular}
 \end{table}

\section{Experiments and Results} \label{s:experiments}

We now perform an experimental evaluation of the proposed algorithms.
This is done in the following order: first analyze the impact of algorithmic components on our branch-and-reduce algorithm in a non-parallel setting, i.e.~we compare different variants for branching edge selection, priority queue comparator and the effects of the kernelization operators. We then report the speedup over ILP formulation and as well as parallel speedup on a variety of graphs. Lastly, we perform experiments on large real-world networks of various sources and protein-protein interaction graphs comparing \texttt{Kernel+ILP} and the branch-and-reduce algorithm.

\subsection{Experimental Setup and Methodology}

We implemented the algorithms using \CC-17 and compiled all
codes using g++-7.4.0 with full optimization (\texttt{-O3}). Our experiments are conducted on a
machine with two Intel Xeon Gold 6130 with 2.1GHz with 16 CPU cores each and
$256$ GB RAM in total. We perform five repetitions per instance and report average running~time. In this section we first describe experimental methodology. Afterwards, we evaluate different algorithmic choices in our algorithm and then we compare our algorithm to the state of the art. When we report a mean result we give the geometric mean as problems differ strongly in result and time.

\emph{Performance plots}
relate the fastest running time to the running time of each other algorithm on a per-instance basis.
For each algorithm, these ratios are sorted in increasing order. The plots show the ratio $t_\text{algorithm}/t_\text{best}$ on the y-axis. 
A point significantly above one indicates that the running time of the algorithm
was considerably worse than the fastest algorithm on the same instance. A value of one therefore indicates that the corresponding algorithm was one of the fastest algorithms to compute the solution.
Thus an algorithm is considered to outperform another algorithm if its corresponding values are below those of the other algorithm.

\subsubsection{Instances}

We use multiple set a instances to avoid overtuning the branch-and-reduce algorithm. To analyze the impact of algorithmic components in Sections~\ref{ss:branch_impl} to \ref{ss:kernelizationexp}, we  generate random hyperbolic graphs using the KaGen graph generator~\cite{funke2018communication}. These graphs have $n = 2^{14} - 2^{18}$ and an average degree of $8$, $16$ and $32$. For each graph size, we use three generated graphs and compute the multiterminal cut, each with $k \in \{3,4,5,6,7\}$. We use randomy hyperbolic graphs as they have power-law degree distribution and resemble a wide variety of real-world networks.
Additionally, we also use a family of weighted graphs from the $10^{th}$ DIMACS implementation challenge~\cite{bader2013graph}. These graphs depict US states, where a vertex depicts a census block and a weighted edge denotes the length of the border between two blocks. We use the $10$ states with fewest census blocks (AK, CT, DE, HI, ME, NH, NV, RI, SD, VT). For each state, we set the number of terminals $k \in \{3,4,5,6,7\}$. A multiterminal cut on these graphs depicts the shortest border that respects census blocks and separates a set of pre-defined blocks (or groups of blocks).
Here, we use one processor and set a timeout of $3$ minutes and a memory limit of $20$GiB. On these instances we also run the ILP and compare the results in Section~\ref{ss:comparisonilp} -- note that running the ILP itself without the kernelization rules on the large instances below is not feasible. We then look at the impact of parallelization for our algorithms in Section~\ref{ss:parallelbnr}.

When comparing \texttt{Kernel+ILP} with our branch and reduce framework in Sections~\ref{ss:social}~and~\ref{ss:ppi} on large instances, we use all $32$ cores of the machine (for the ILP as well as the branch and reduce framework). Here, we set a time limit of $1$ hour and a memory limit of $250$GiB and use the following graphs
In Section~\ref{ss:social} we perform experiments on $10$ large real-world networks of various sources. Table~\ref{t:ppigraphs} shows the sources and properties of the graphs. For each graph, we solve the minimum multiterminal cut problem for $k \in \{3,4,5,8\}$ terminals and $p \in \{10\%, 15\%, 20\%, 25\%\}$ vertices in the terminal. 
In Section~\ref{ss:ppi} we perform experiments on protein-protein interaction networks generated from the STRING protein interaction database~\cite{szklarczyk2010string,szklarczyk2018string} by using all edges they predict with a high certainty. We use the protein description to assign functions (block terminal affiliations) to proteins (vertices). We use the first occurence of a set of pre-defined function classes. For each graph, we examine problems with the $4,5,6,7,8$ most often occuring functions and with all (up to $15$, if all occuring in an organism) classes.

\begin{figure}[!t]
  \centering
  \begin{subfigure}{.5\textwidth}
    \includegraphics[width=\linewidth]{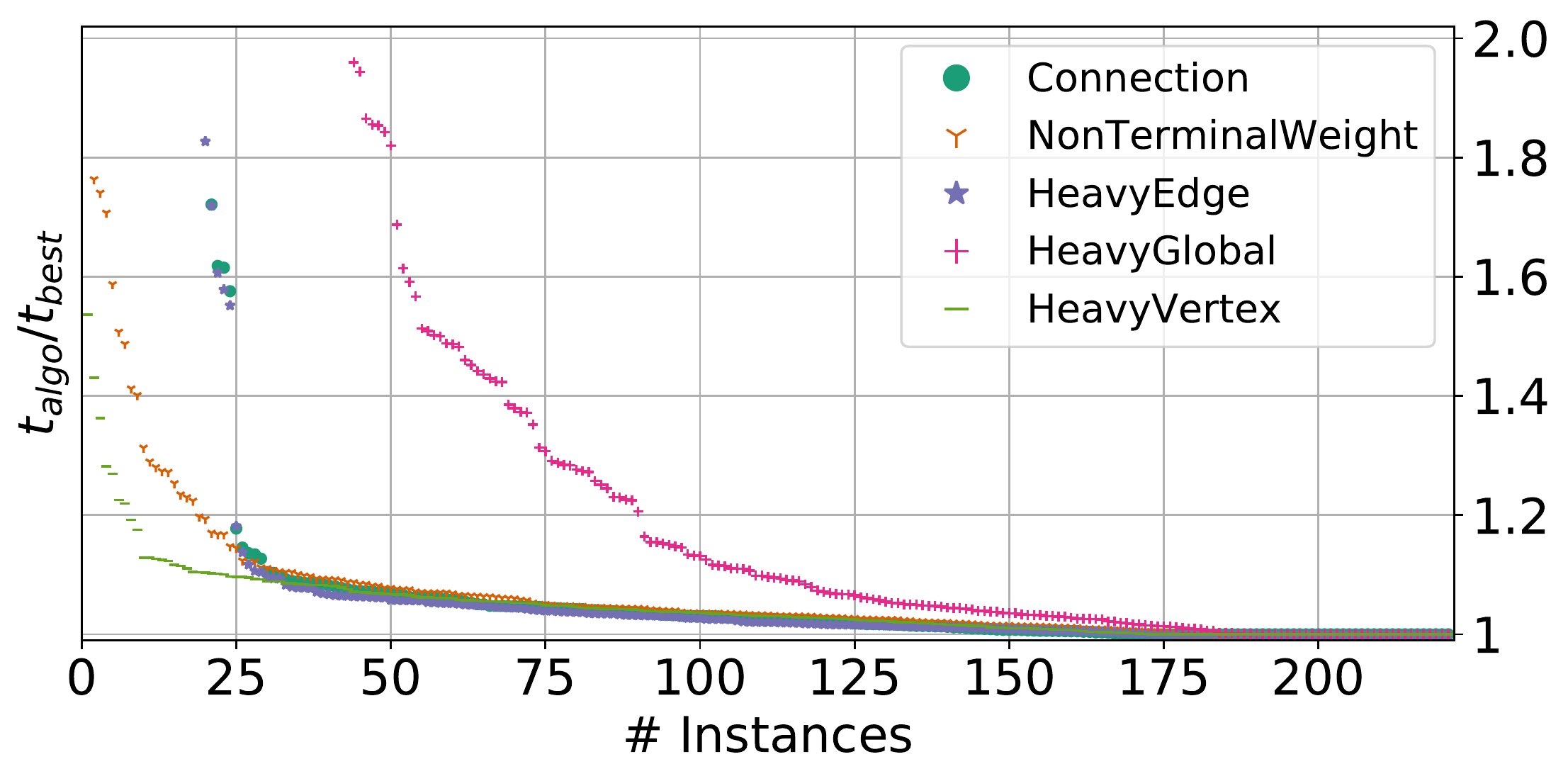}
    \caption{RHG graphs with partition centers as terminals}
    \label{fig:branch2}
  \end{subfigure}%
  \begin{subfigure}{.5\textwidth}
    \includegraphics[width=\linewidth]{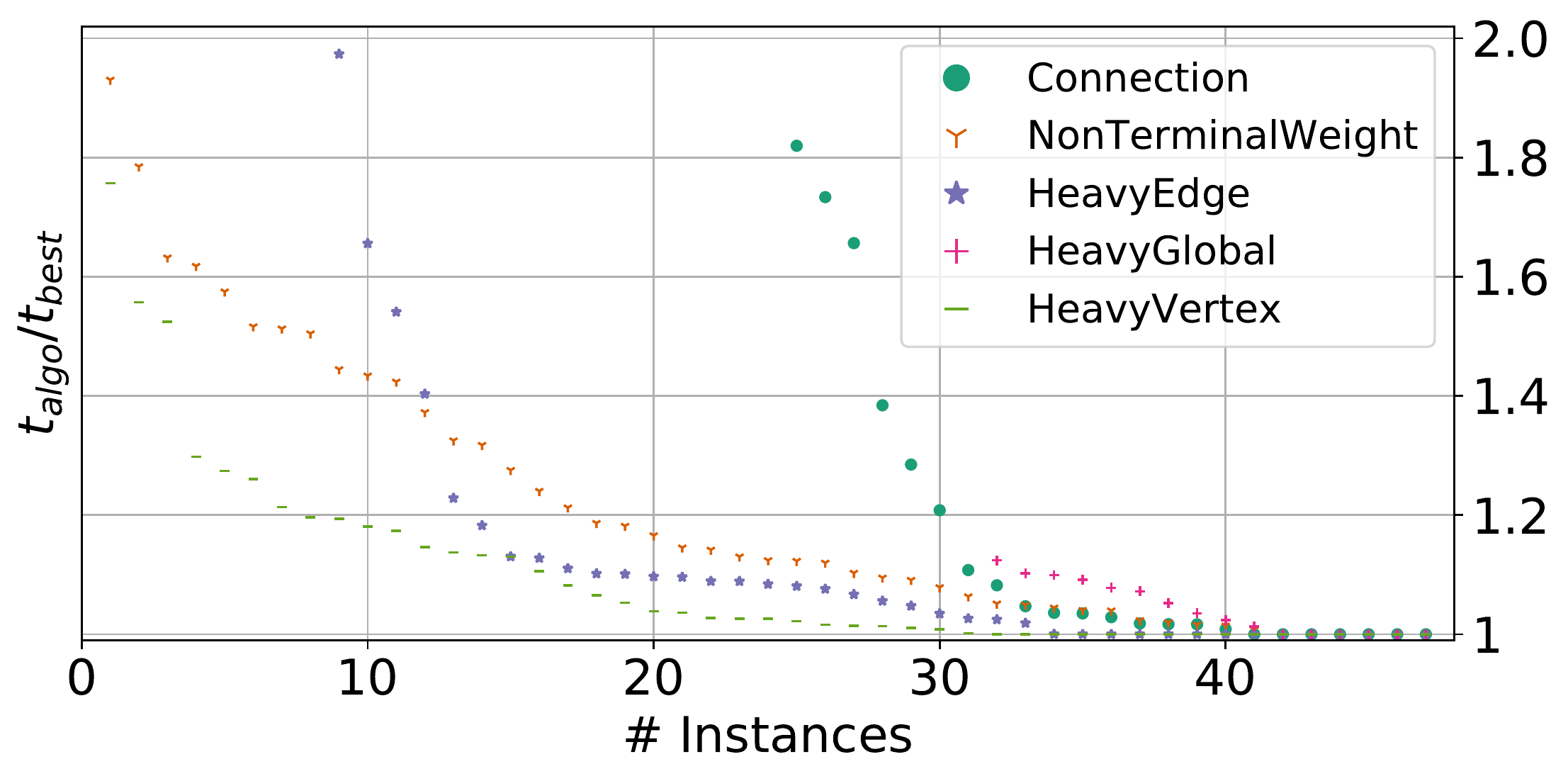}
    \caption{Map graphs with partition centers as terminals}
    \label{fig:branch4}
  \end{subfigure}
  \caption{Performance plots for branching edge selection variants}
  \label{fig:branch}
  \end{figure}

\subsection{Branching Edge Selection}\label{ss:branch_impl}

Figure~\ref{fig:branch} shows the results for the branching edge selection rules. In Subfigure~\ref{fig:branch2}, we show performance plots for RHG graphs and in Subfigure~\ref{fig:branch4} we show performance plots for map graphs. To find terminals, we partition the RHG graphs into $k$ parts and perform a breadth-first search starting in the block boundary. We define the vertex encountered last as the block center and use it as a terminal. In this experiment we use the \texttt{BoundSum} comparator and enable all kernelization rules.

As the minimum multiterminal cut of those problems usually turns out to be the trivial multiterminal cut of $k-1$ blocks of size $1$ and one block that comprises of the rest of the graph, we instead pick the last $10$ vertices encountered by the breadth-first search per block and contract them into a terminal. The minimum multiterminal cut of the resulting graph is usually not equal to the trivial multiterminal cut.

 In general, we aim to increase the lower bound by a large margin to reduce the number of subproblems that need to be checked. When we branch on a heavy edge, this increases the lower bound for $G-e$ by a large amount. For $G/e$, the lower bound is increased by half the amount of flow that is now added to the network. For a vertex that has a large number of edges to non-terminal vertices, contracting it into a terminal is expected to increase the flow by a large margin. The variant \texttt{HeavyVertex} chooses the edge $e$, for which the sum of edge weight and outgoing weights are maximized. It thus outperforms all other variants in both experiments. The only variant that is not guaranteed to be fixed-parameter tractable is \texttt{HeavyGlobal}, as this variant can also contract edges that are not incident to a terminal (and thus do not necessarily increase the lower bound). However, most edge contractions happen near terminals, so most heavy edges occur near terminals and thus \texttt{HeavyGlobal} often performs similar to \texttt{HeavyEdge}. 

 In all following experiments we use \texttt{HeavyVertex}, as it outperforms all other variants consistently.

 \begin{figure}[t]
  \centering
  \begin{subfigure}{.5\textwidth}
    \centering
    \includegraphics[width=\linewidth]{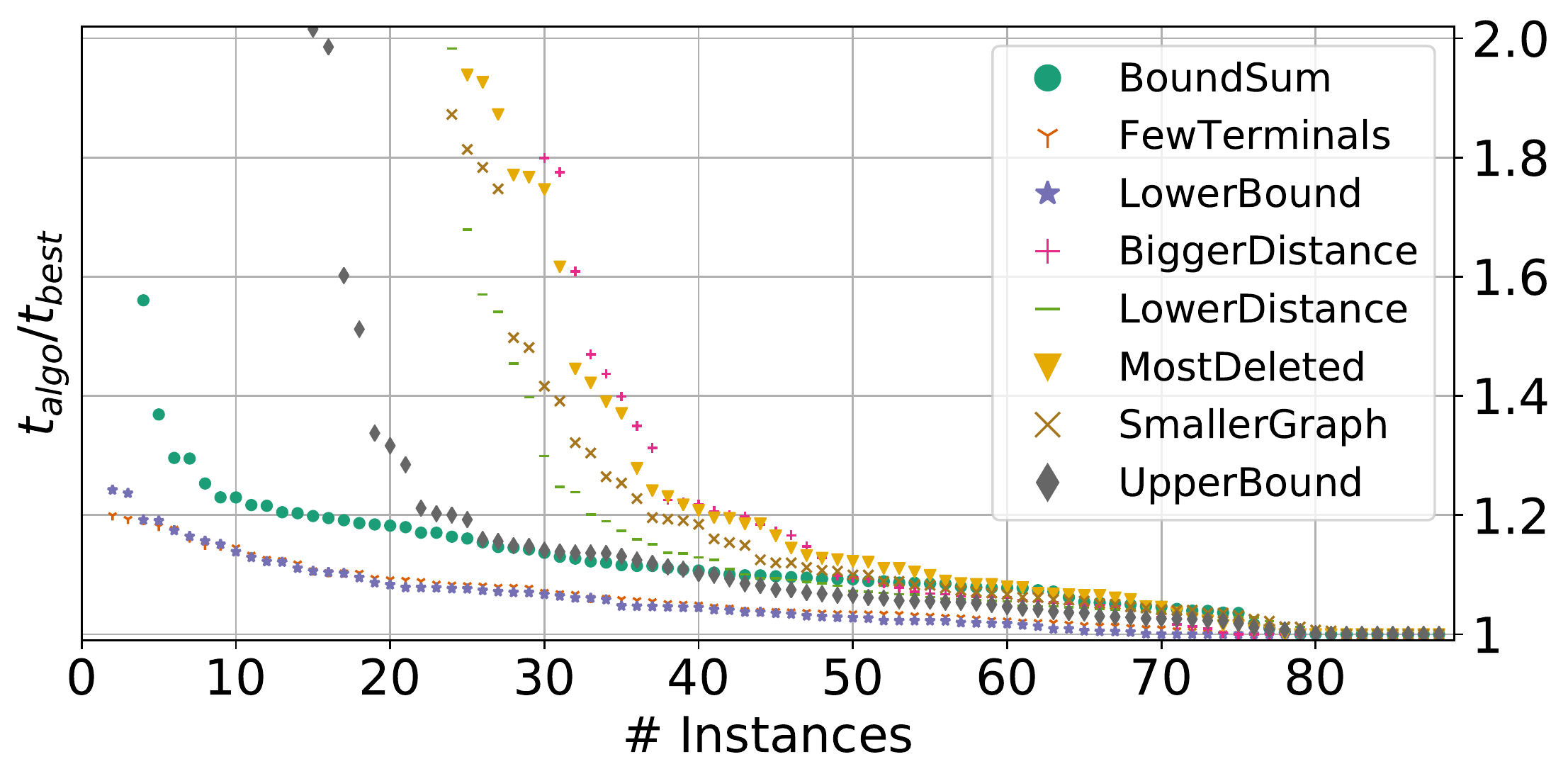}
    \caption{Graphs with $20\%$ of vertices in terminal}
    \label{fig:pq1}
  \end{subfigure}%
  \begin{subfigure}{.5\textwidth}
    \centering
    \includegraphics[width=\linewidth]{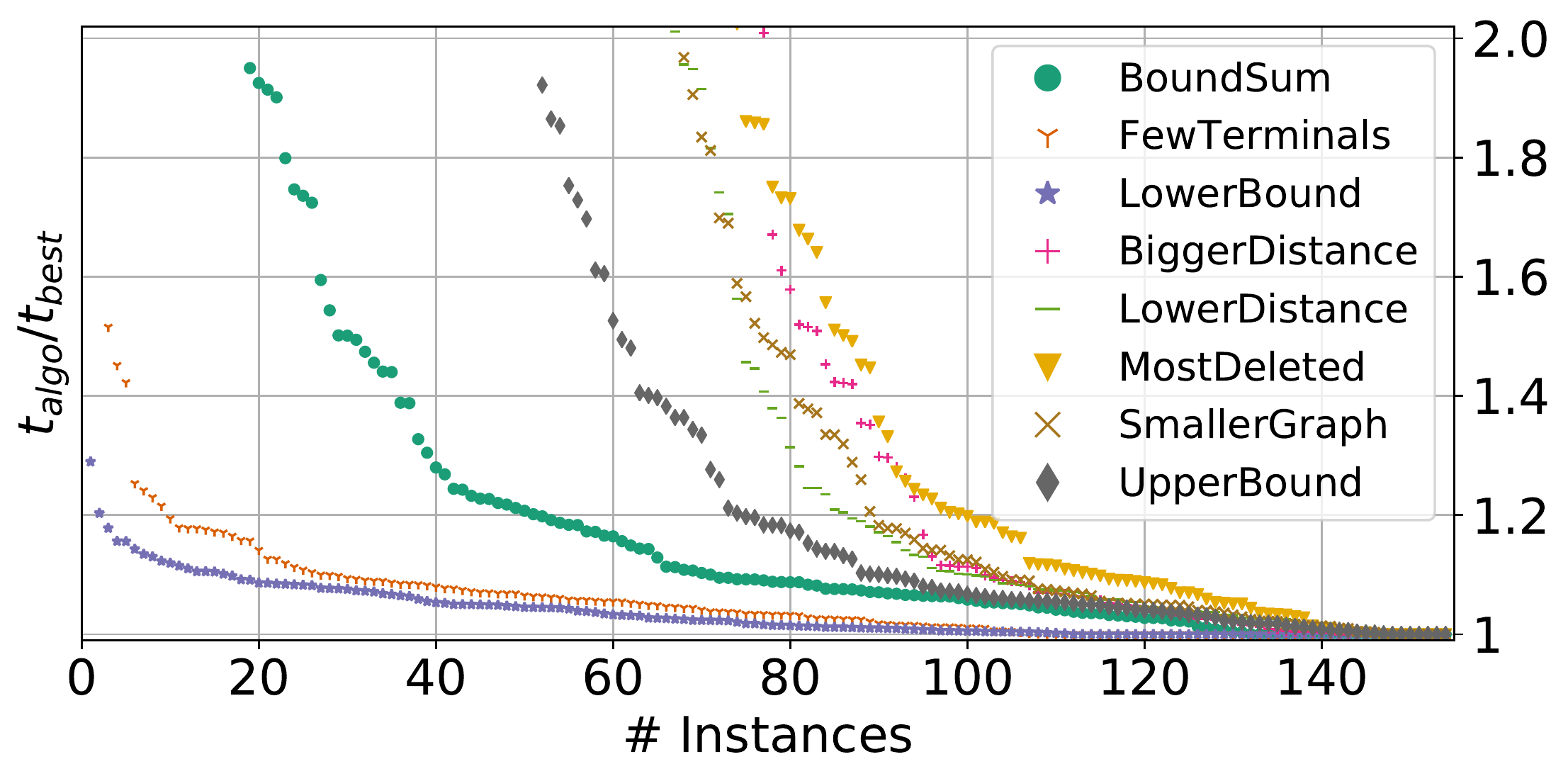}
    \caption{Graphs with $80\%$ of vertices in terminal}
    \label{fig:pq2}
  \end{subfigure}

  \caption{Performance plots for priority queue comparator variants}
  \label{fig:pq}
  \end{figure}

\subsection{Priority Queue Comparator}\label{ss:queue_impl}

We now explore the effect of the comparator used in the priority queue $\queue$. The choice of comparator decides which problems are highest priority and will be explored first. We want to first explore the problems and branches which will result in an improved solution, as this allows us to prune more branches. However, it is not obvious which criterion correctly identifies problems that might yield improved solutions, either directly on indirectly. Thus, we perform experiments on the same set of random hyperbolic and map graphs. 

On the random hyperbolic graphs examined in the previous experiment, the minimum multiterminal cut is often equal to the sum of all minimum-s-T-cuts excluding the heaviest. This is the cut that is found in the first iteration. If this is also the optimal cut, we definitely have to check all subproblems whose lower bound is lower than this cut. As the priority queue comparator only changes the order in which we examine those problems, the experimental results using the same problems as the previous section turned out very inconclusive. However, if we contract a sizable fraction of each block into its terminal, the minimum multiterminal cut is usually not equal to the union of s-T-cuts. Figure~\ref{fig:pq1} shows results for $20\%$ of vertices in the terminal on RHG graphs and Figure~\ref{fig:pq2} show results for $80\%$ of vertices in the terminal. 

\texttt{LowerBound} and \texttt{FewTerminals} are very competitive on most graphs. This indicates that problems with a low lower bound are very likely to yield improved results. The next fastest variant is \texttt{BoundSum}, which is almost competitive with $20\%$ of vertices in the terminal but significantly slower with $80\%$ of vertices in the terminal. However, \texttt{BoundSum} uses far less memory, as the lower bound of the newly created problems depends on the lower bound of the current problem. \texttt{BoundSum} examines many problems for which the lower bound is close to the currently best known solution. Thus, many newly created subproblems are immediately discarded when their lower bound is not lower than the currently best known solution. None of the other variants have noteworthy performance.

\begin{figure}[t!]
  \centering
  \begin{subfigure}{.5\textwidth}
   \centering
   \includegraphics[width=\linewidth]{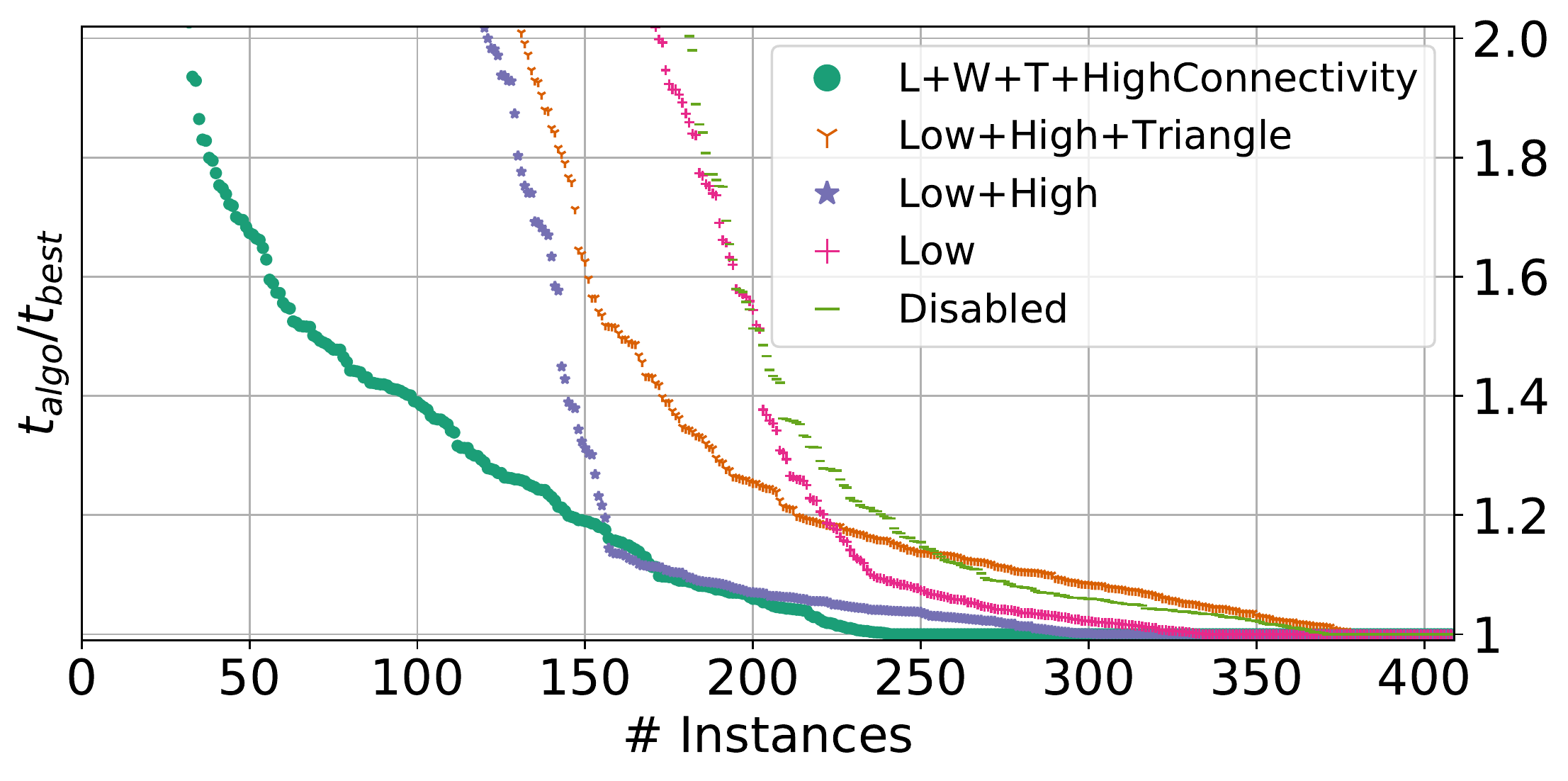}
   \caption{Impact of kernelization on RHG graphs.}
   \label{fig:kernel1}
 \end{subfigure}%
 \begin{subfigure}{.5\textwidth}
   \centering
   \includegraphics[width=\linewidth]{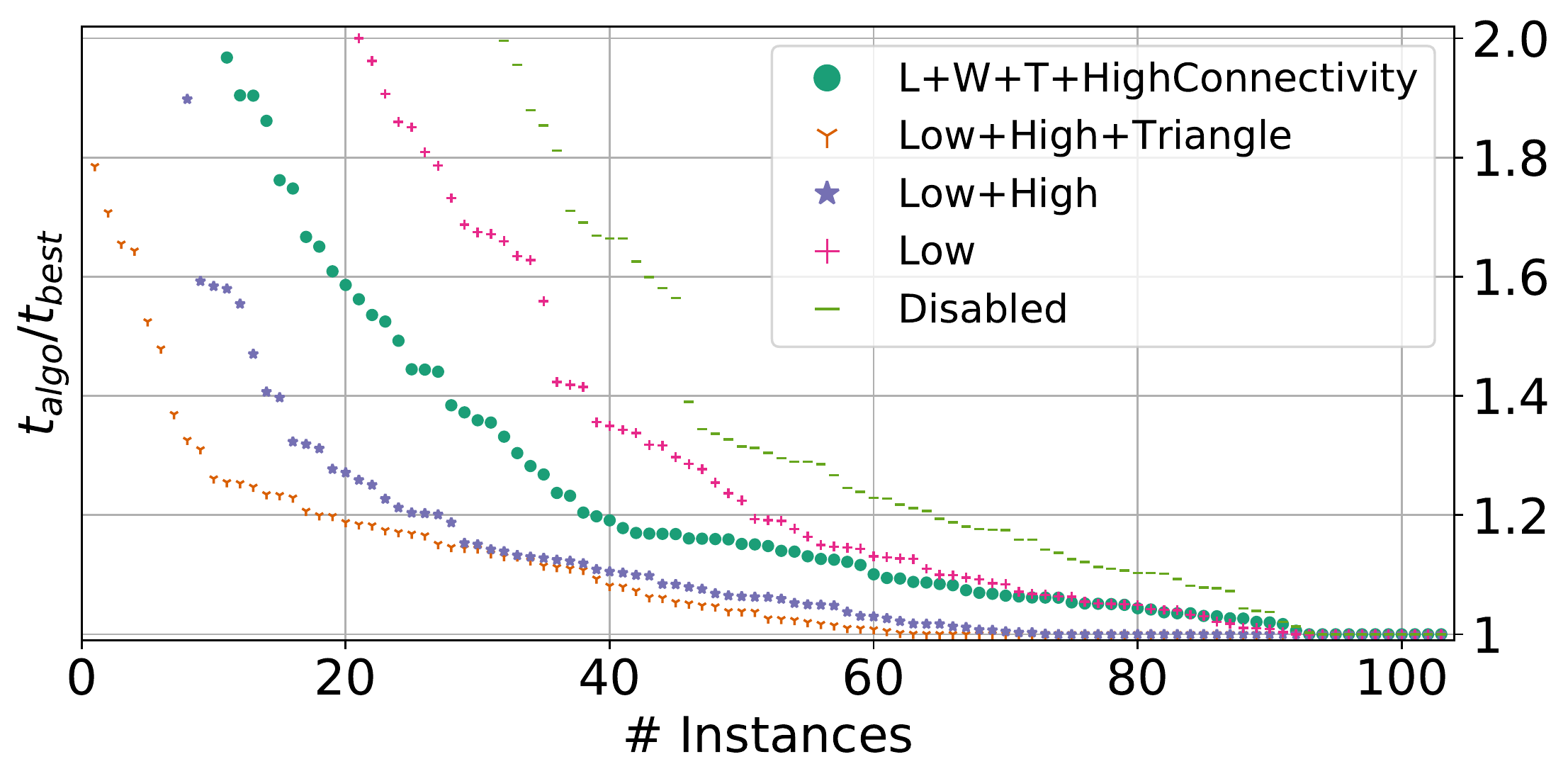}
   \caption{Impact of kernelization on map graphs.}
   \label{fig:kernel2}
 \end{subfigure}
 \caption{Performance plots for kernelization variants}
 \label{fig:kernel}
 \end{figure}

\subsection{Kernelization}
\label{ss:kernelizationexp}

We now study the effects of the kernelization operations performed in this work. For this purpose, we compare our algorithm without any kernelization to variants that enable different subsets of the kernelization operators detailed in Section~\ref{ss:kernel}. Figure~\ref{fig:kernel1} shows the results on the RHG graphs and Figure~\ref{fig:kernel2} shows the results on the map graphs. We use \texttt{BoundSum} as the priority queue comparator. In both cases, we combine results of $10$ vertices, $20\%$ and $80\%$ of vertices near block center in the terminal. The requirements in \texttt{Low} and \texttt{High} can be checked quickly whereas checking \texttt{Triangle} and \texttt{HighConnectivity} requires significant time. Thus, running \texttt{Low+High} is always useful, no matter how many edges can actually be contracted. On the RHG graphs in Figure~\ref{fig:kernel1}, \texttt{Triangle} does not find a lot of contractible edges that weren't already found by the previous kernelization operators. The high-degree vertices in the center of the hyperbolic plane have very high connectivity and thus, \texttt{HighConnectivity} is able to significantly reduce graph sizes and significantly improve running times compared to all other variants. In contrast, on the map graphs in Figure~\ref{fig:kernel2}, the connectivity of an edge can only be as high as the border length of the smaller vertex. Thus, we do not find many contractible edges. \texttt{Triangle}, however, is able to find many edges to contract, as vertices usually have few high degree neighbours. We can see that the utility of the kernelization operators depends heavily on the structure of the graph.

\begin{figure}
  \centering
  \begin{minipage}{.45\textwidth}
    \centering
    \raisebox{10pt}{
    \includegraphics[width=\linewidth]{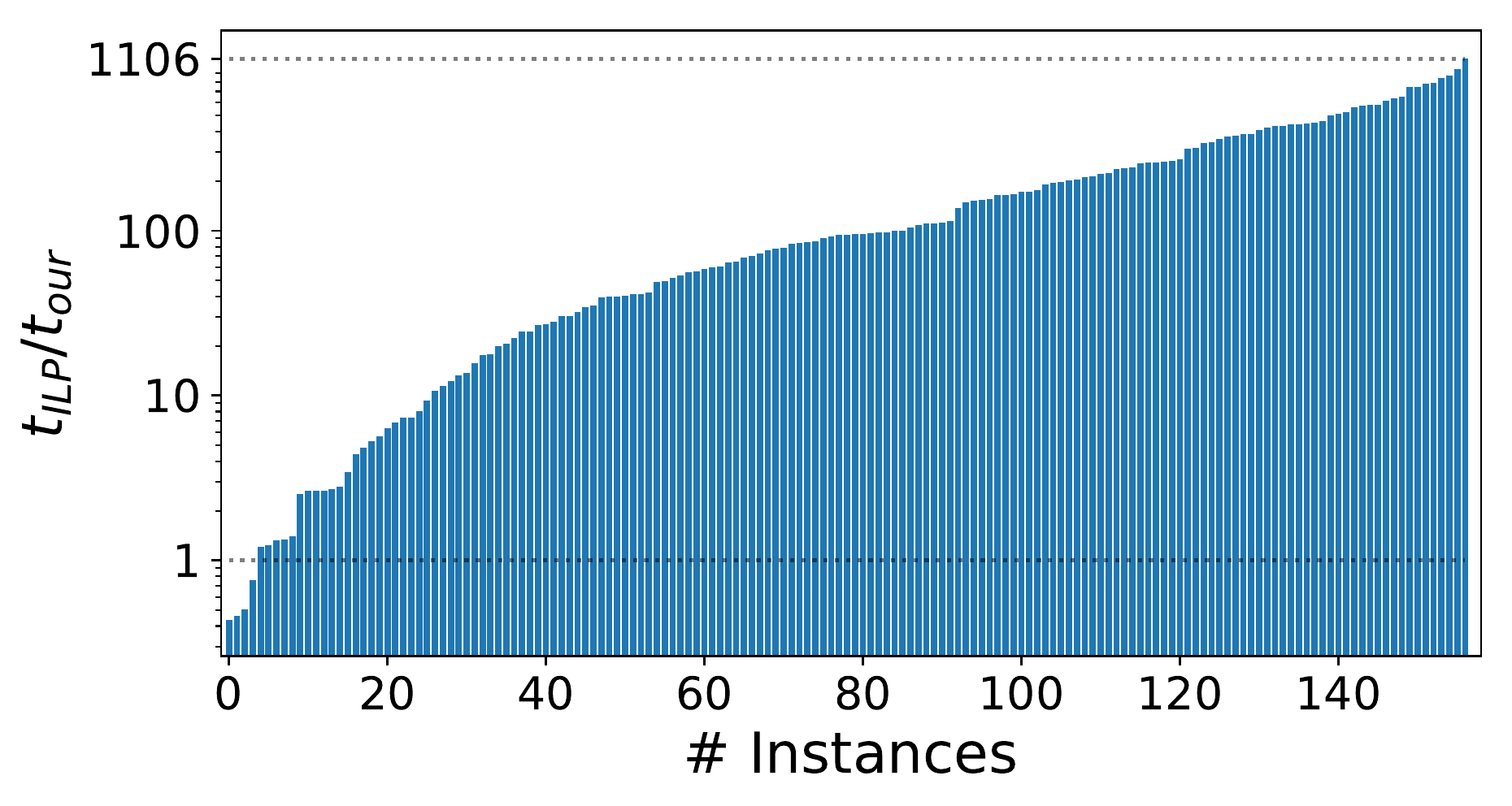}
    }
    \caption{Speedup of optimized branch-and-reduce to ILP}
    \label{fig:speeduptoilp}
  \end{minipage}%
  \hspace{5mm}
  \begin{minipage}{.45\textwidth}
    \centering
    \includegraphics[width=\textwidth]{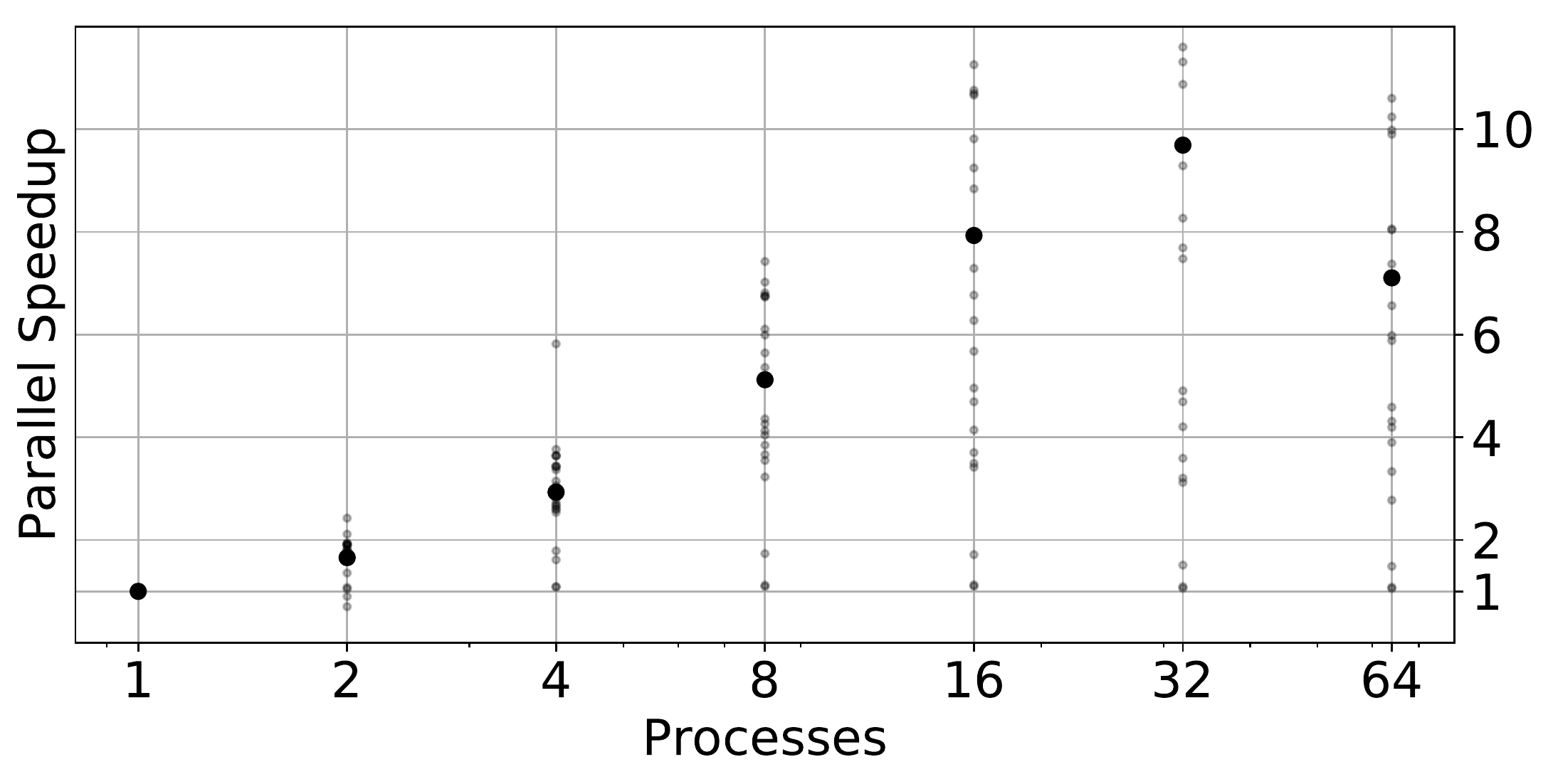}
    \caption{\label{fig:parallel}Parallel speedup on a variety of graphs. (low-alpha dot: one graph, solid dot: average speedup)}
  \end{minipage}
  \end{figure}

\subsection{Comparison to ILP}
\label{ss:comparisonilp}

Figure~\ref{fig:speeduptoilp} shows the speedup of the engineered branch-and-reduce algorithm, using \texttt{HeavyVertex} edge selection, \texttt{LowerBound} priority queue comparator and all kernelization rules enabled, to the ILP on all graphs from the previous subsections in which the ILP managed to find the minimum multiterminal cut within $3$ minutes. The branch-and-reduce algorithm outperforms the ILP on almost all graphs, often by multiple orders of magnitude. The ILP only solves $24\%$ of all problems, our algorithm solves $61\%$; on the problems solved by both, our optimized algorithm has a mean speedup factor of $67$, a median speedup factor of $95$ and a maximum speedup factor of $\numprint{1106}$. The mean speedup factor of the average of our algorithms compared to ILP is $43$ with a median speedup factor of $71$. Compared to the original ILP, \texttt{Kernel+ILP} is faster on all instances, has a mean speedup factor of $44$ and a median speedup factor of $49$. For figures, see Appendix~\ref{app:figures}.

This allows us to solve instances with more than a million vertices, while the ILP was unable to solve any instance with more than $\numprint{100000}$ vertices. As the basic ILP is unable to solve any large instances, we do not use it in the following experiments on large graphs.

\begin{figure*}[ht!]
  \centering
  \begin{subfigure}{.5\textwidth}
    \includegraphics[width=\linewidth]{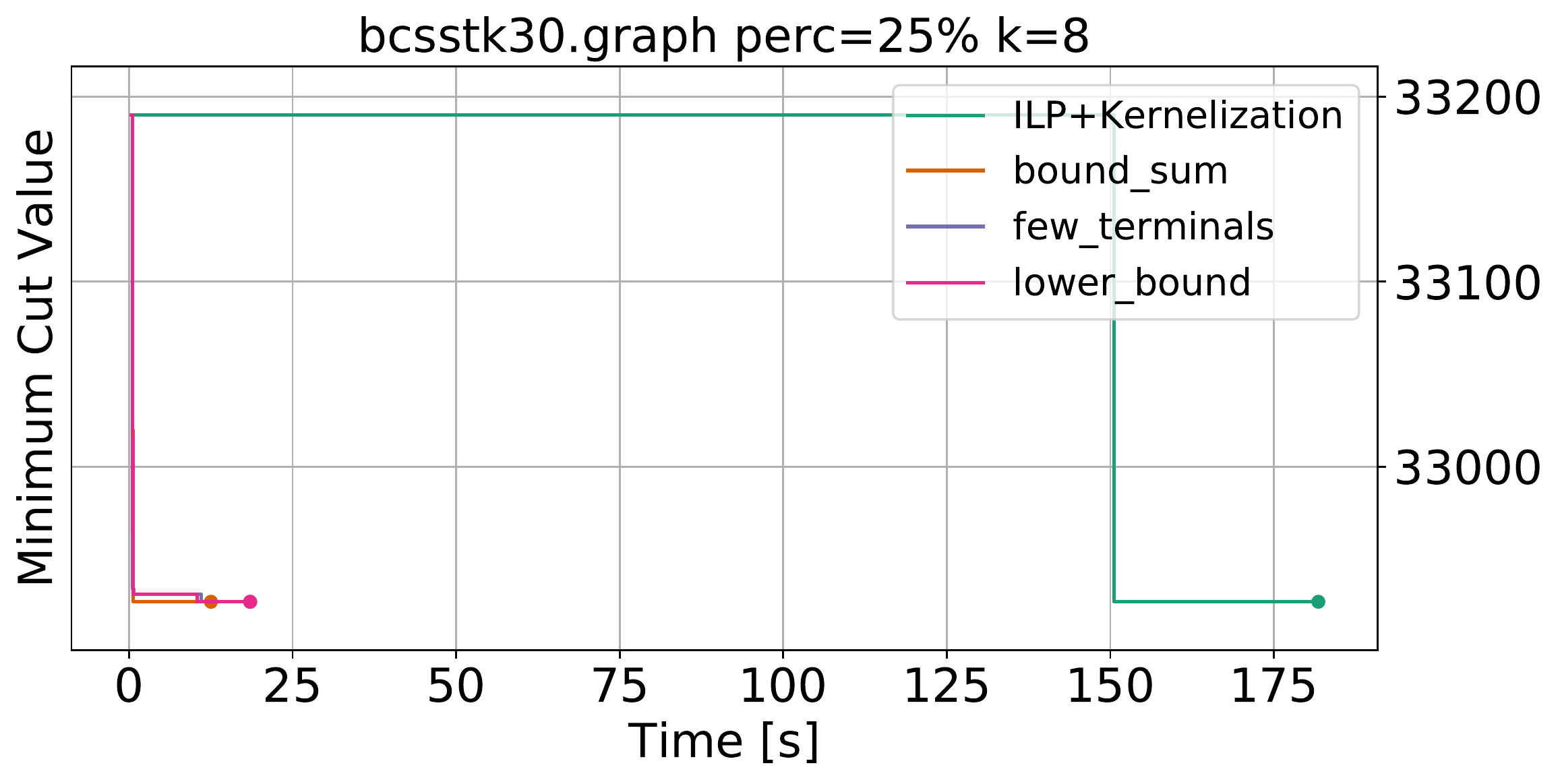}
  \end{subfigure}%
  \begin{subfigure}{.5\textwidth}
    \includegraphics[width=\linewidth]{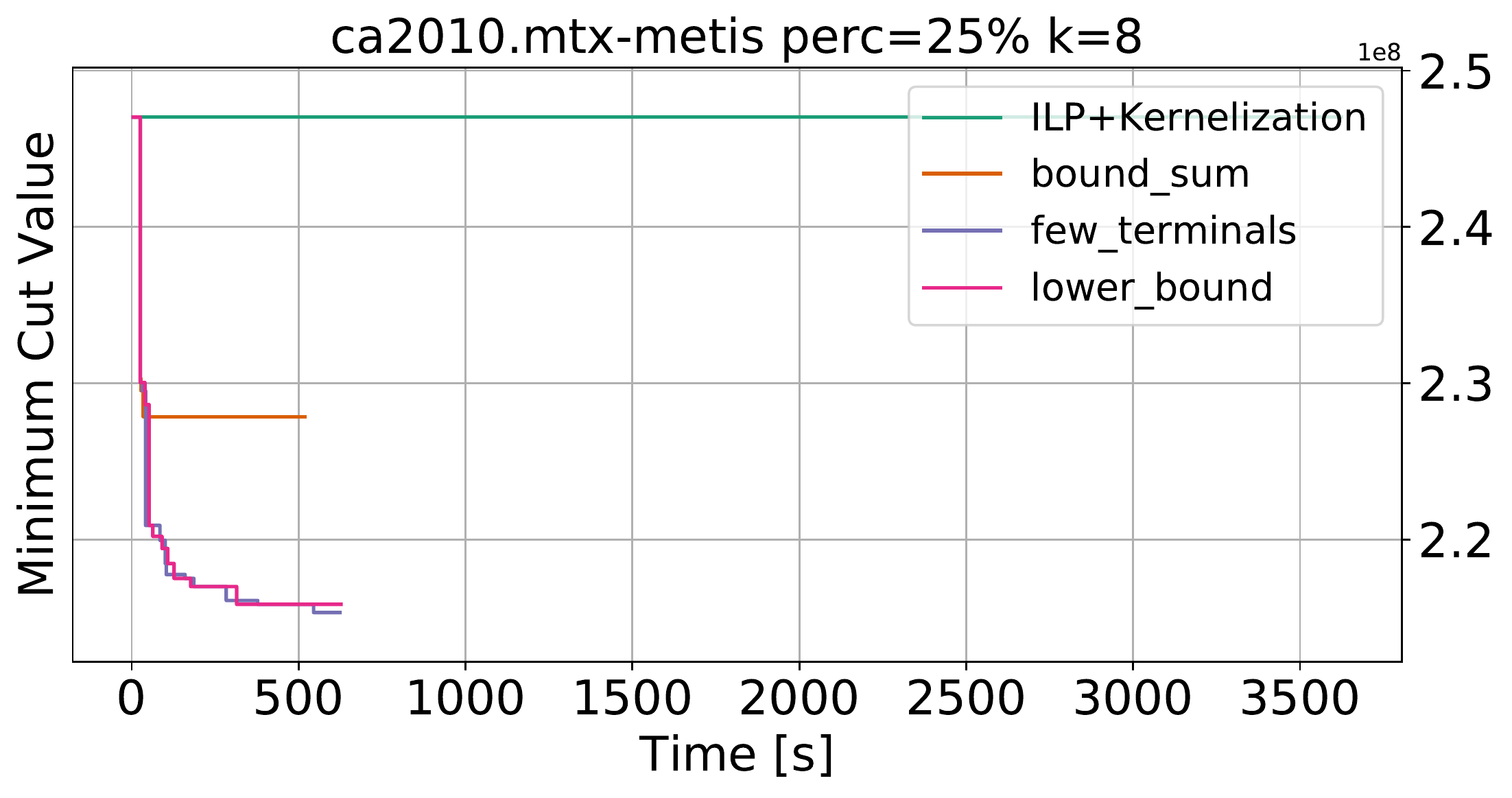}
  \end{subfigure}
  \begin{subfigure}{.5\textwidth}
    \includegraphics[width=\linewidth]{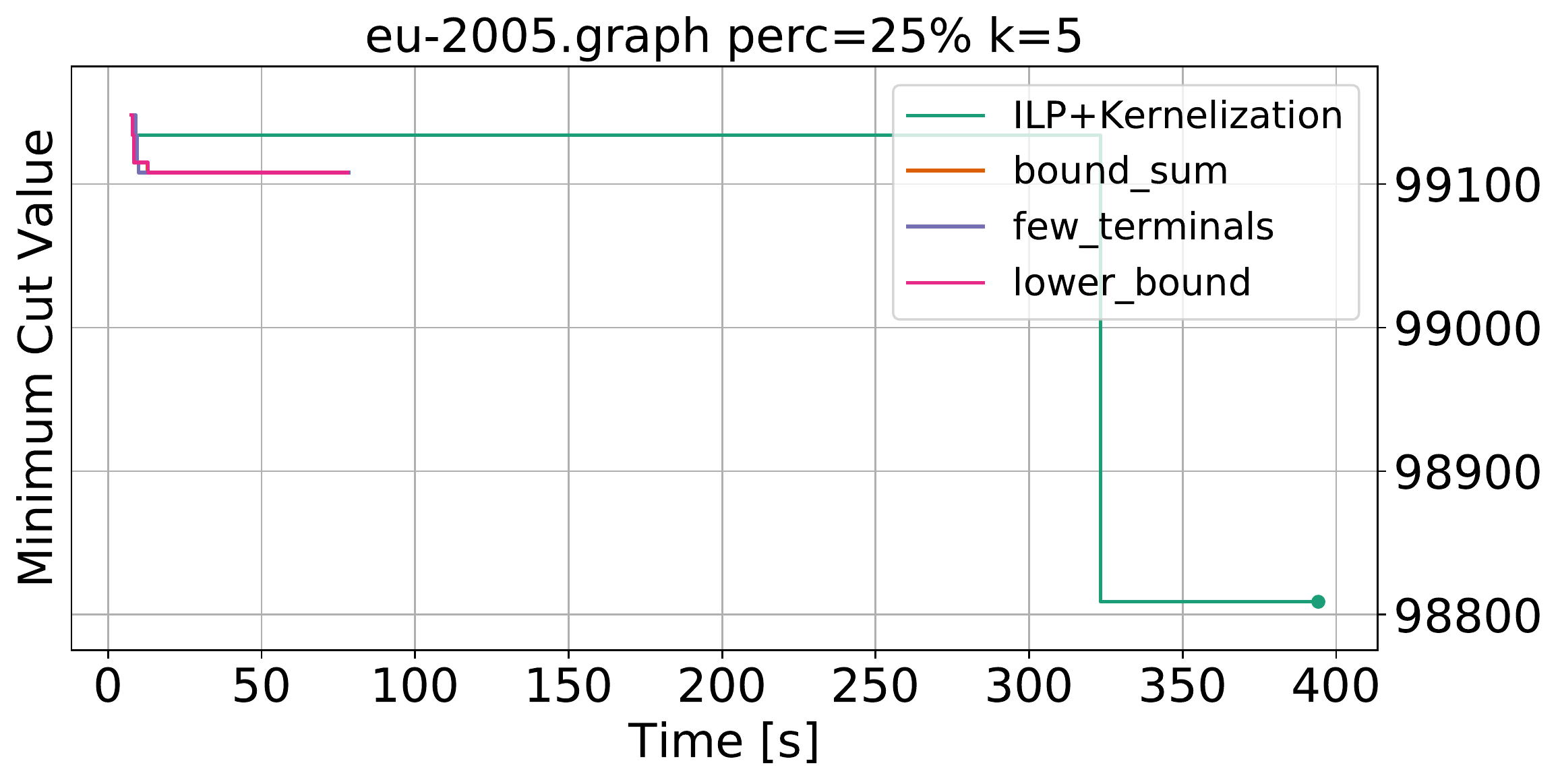}
  \end{subfigure}%
  \begin{subfigure}{.5\textwidth}
    \includegraphics[width=\linewidth]{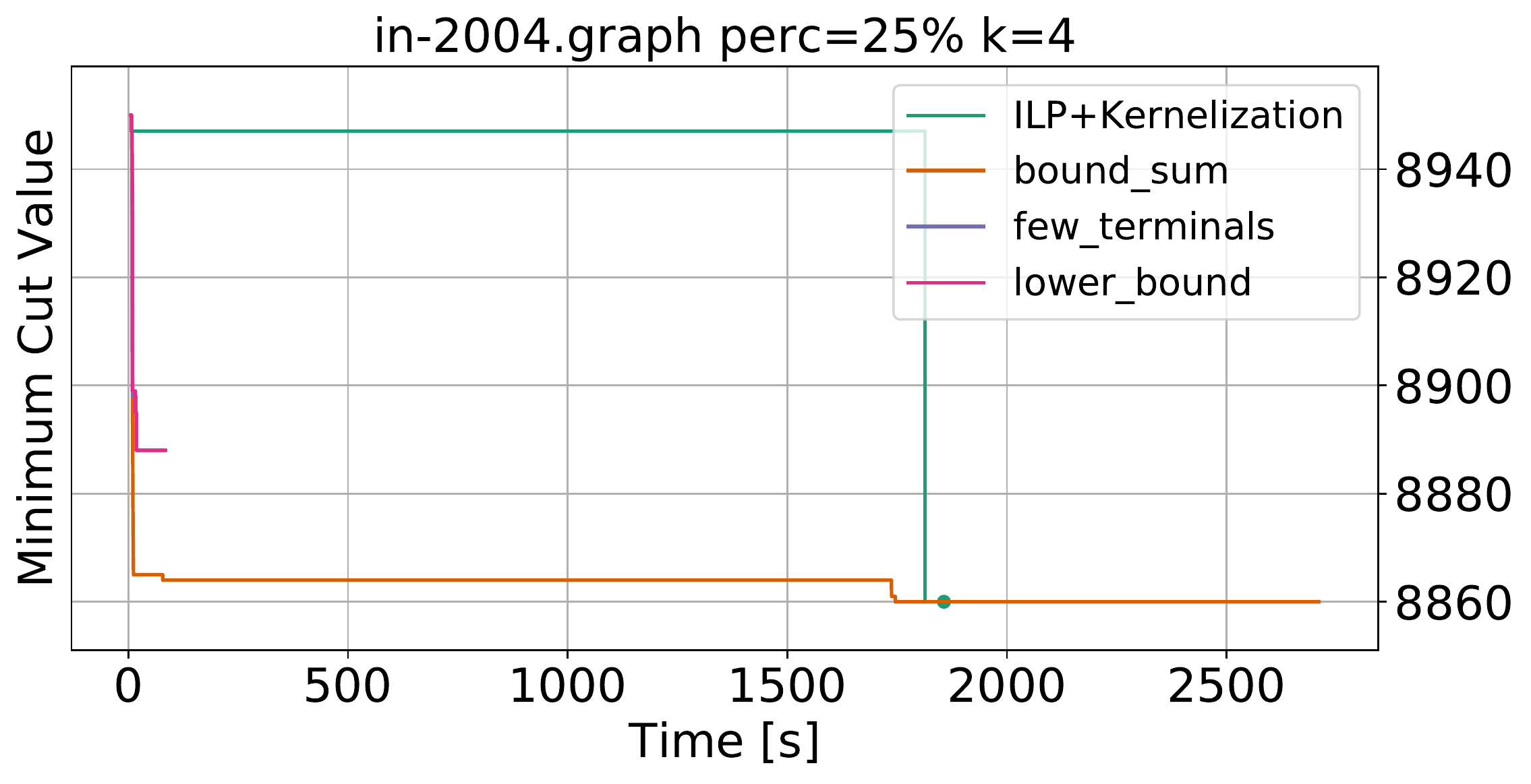}
  \end{subfigure}
  \begin{subfigure}{.5\textwidth}
    \includegraphics[width=\linewidth]{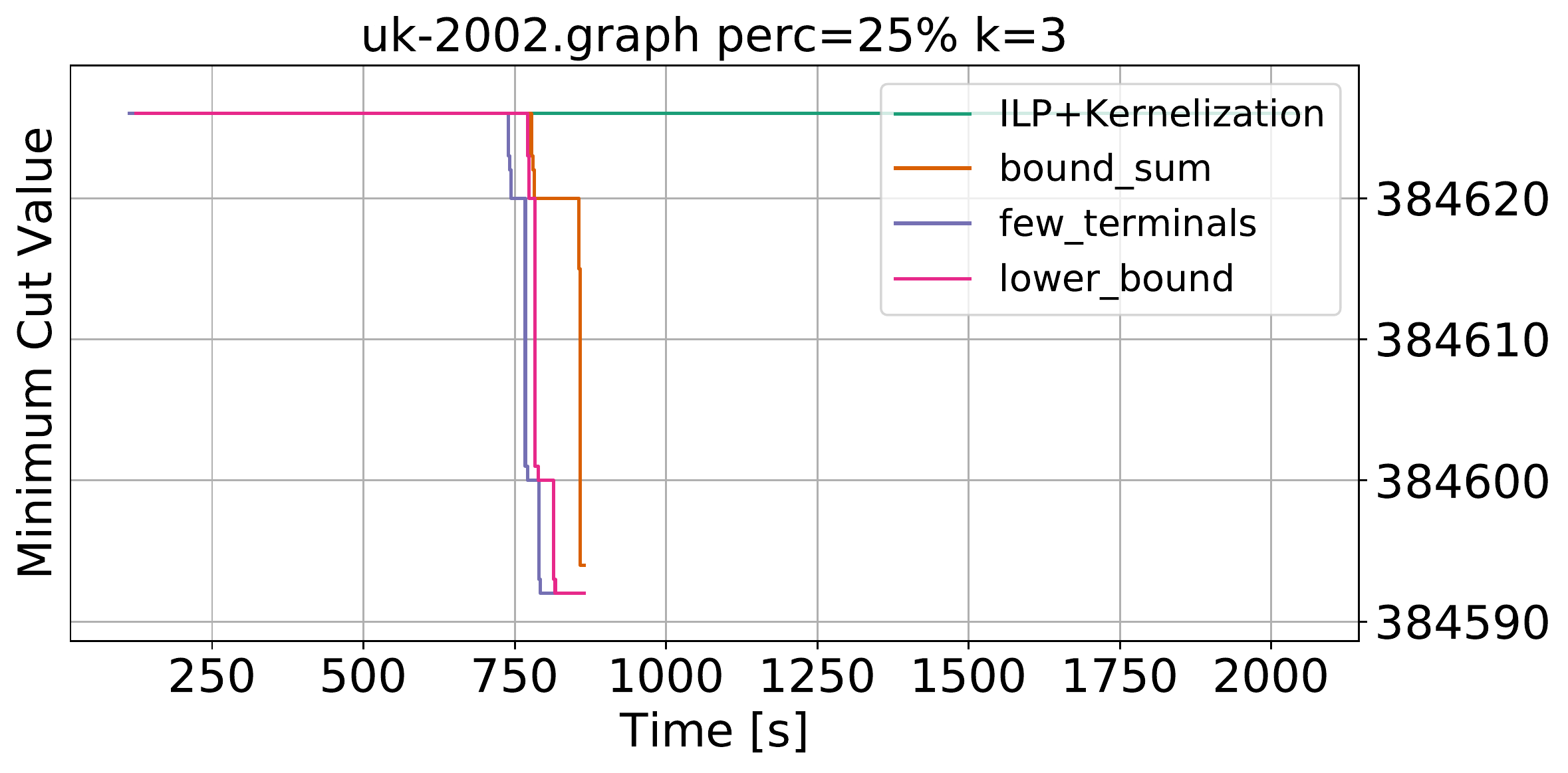}
  \end{subfigure}%
  \begin{subfigure}{.5\textwidth}
    \includegraphics[width=\linewidth]{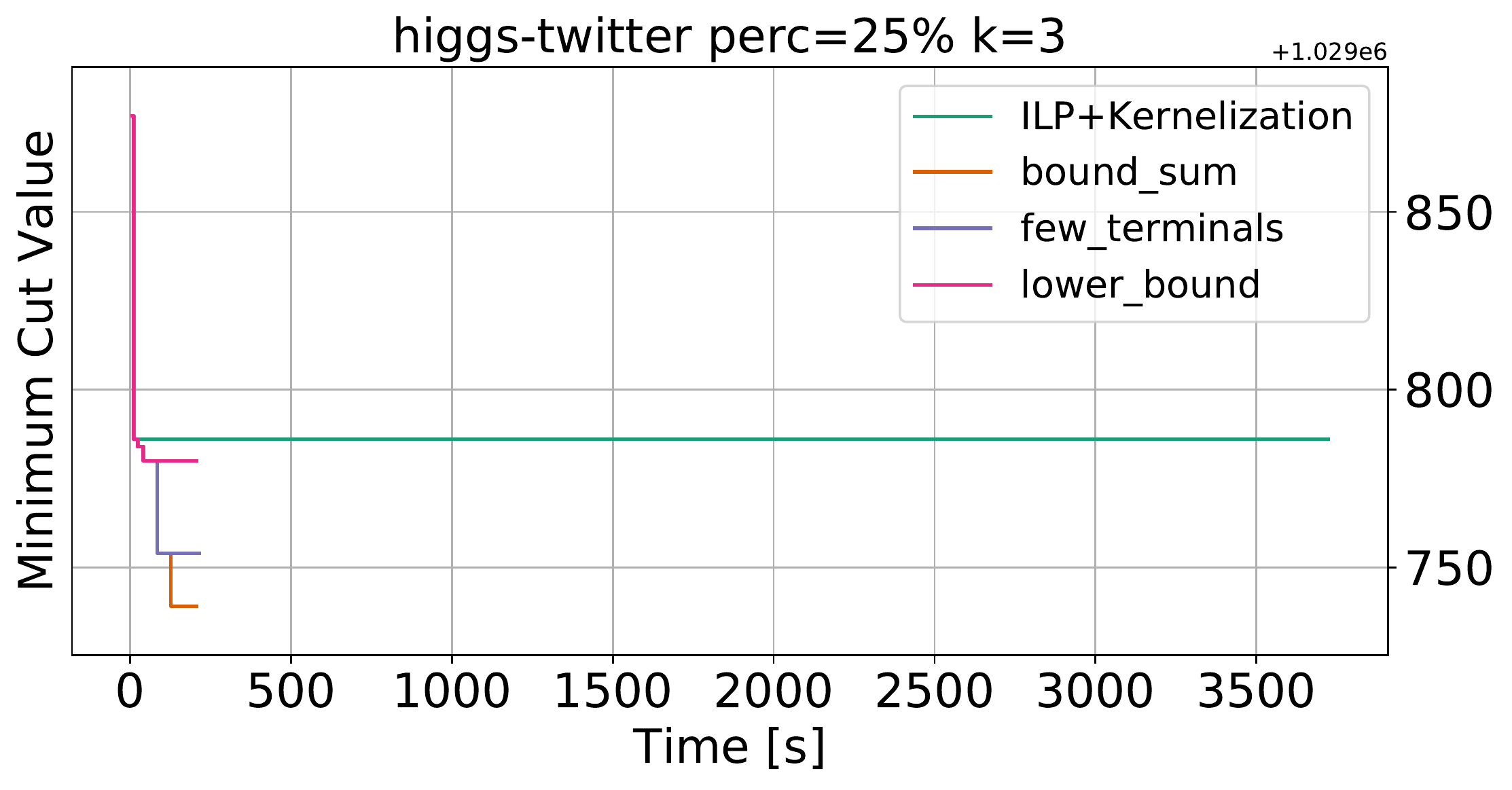}
  \end{subfigure}

  \caption{\label{fig:social} Progression of best result over time. Dot at end symbolizes that algorithm certifies optimality.}
\end{figure*}

\subsection{Parallel Branch and Reduce}
\label{ss:parallelbnr}

The previous experiments were all performed sequentially on a single thread. To see parallel speedup, we chose the $14$ RHG and $9$ map graphs that required the longest running time out of all terminated instances and solve them with varying amounts of processes. Figure~\ref{fig:parallel} shows the speedup for all graphs. The machine has $2$ processors with $16$ cores each and hyperthreading enabled. We can see that the average speedup factor is $7.9$x with $16$ threads, \ie one thread for each core of a single CPU. The speedup only slightly increases when using both CPUs and $32$ cores, to a factor of $9.7$x. This is caused by the large amount of data which needs to be transferred on the comparatively slower connection between the processors. When using hyperthreading, the average speedup even slows down to a factor of $7.1$x. 

On a few problems there is almost no parallel speedup.  These are the problems in which there is a large amount of work in a single graph and the flow problems are of starkly different difficulties. Thus, we would need to solve the flow problems in parallel to achieve a speedup in these graphs. As we only observed this behaviour in a few cases in the smaller RHG graphs and never in any large real-world networks, we refrained from including that additional complexity in our algorithm. If we exclude these $3$ problems from the problem set, we have an average speedup factor of $8.9$x with $16$ threads and of $10.9$x with $32$ threads. \texttt{Kernel+ILP} achieved almost no parallel speedup on these problems. This is the case as a large part of the solving time is spent in the sequential root relaxation.

\subsection{Large Real-World Networks}
\label{ss:social}

We aim to solve multiterminal cut problems on the large real-world networks in Table~\ref{t:ppigraphs} from a wide variety of graph and problem classes. Figure~\ref{fig:social} shows the progression of the best result over time for a set of interesting problems. Table~\ref{t:overview} gives an overview over the results. For each variant we show how often it produced the best result over all variants and how often it terminated with the optimal result. It also gives the mean result and time for all problems which were solved to optimality by all variants. Both in Figure~\ref{fig:social} and Table~\ref{t:overview} we can see that the branch and reduce variants find good solutions faster than \texttt{Kernel+ILP}. However, the variants often run out of memory in some of the largest instances. Especially in cases where the best multiterminal cut was already found (but not confirmed to be optimal) by the kernelization, \texttt{Kernel+ILP} managed to certify optimality more often than the branching variants. Thus it has the highest amount of terminated results, but reports significantly worse results in average. \texttt{Kernel+ILP} has about half as much improvements as the best variant \texttt{BoundSum}. In addition to giving the best results, variant \texttt{BoundSum} also has the lowest mean time in problems which were solved by all variants, however the improvement over the other branch-and-reduce variants is miniscule. The correlation between running time and number of vertices in the kernel graph is much stronger in \texttt{Kernel+ILP} compared to the branching variants.

\begin{table}[t] \centering
 \small
  \begin{tabular}{| l | r | r | r | r |}
    \hline
    Algorithm & \texttt{K+ILP} & \texttt{BSum} & \texttt{FTerm} & \texttt{LBound}\\
    \hline
    best result & \numprint{118} & \textbf{\numprint{136}} & \numprint{126} & \numprint{125}\\ 
    terminated & \textbf{\numprint{46}} & \numprint{35} & \numprint{33} & \numprint{33}\\
    mean result & \numprint{146570} & \textbf{\numprint{145961}} & \numprint{146052} & \numprint{146025}\\
    mean time & \numprint{18.69}s & \textbf{\numprint{6.71}s} & \numprint{6.97}s & \numprint{6.78}s\\
    \hline    
  \end{tabular}
  \caption{Overview of large real-world networks.\label{t:overview}}
\end{table}

\subsection{Protein-Protein Interaction Networks}
\label{ss:ppi}

Multiterminal cuts can be used for protein function prediction by creating a terminal for each possible protein function and adding all proteins which have this function to this terminal~\cite{karaoz2004whole,nabieva2005whole,vazquez2003global}. Table~\ref{t:ppiov} shows the results for these graphs. We can see that \texttt{Kernel+ILP} outperforms branch-and-reduce by a large margin on most graphs. This is the case because the kernelization is able to reduce the size of the graphs severely. These small problems with high cut values are better suited for \texttt{Kernel+ILP} than the branch-and-bound variants whose running time is more correlated with the value of the minimum multiterminal cut. The mean times are very low as some problems can be solved very quickly and thus drag the mean of all algorithms down. 

 \begin{table}[t] \centering
  \small
   \begin{tabular}{| l | r | r | r | r |}
     \hline
     Algorithm & \texttt{K+ILP} & \texttt{BSum} & \texttt{FTerm} & \texttt{LBound}\\
     \hline
     best result & \textbf{\numprint{57}} & \numprint{34} & \numprint{26} & \numprint{23}\\ 
     terminated & \textbf{\numprint{57}} & \numprint{25} & \numprint{23} & \numprint{21}\\
     mean result & \textbf{\numprint{4183}} & \numprint{4210} & \numprint{4218} & \numprint{4222}\\
     mean time & \textbf{\numprint{0.21}s} & \numprint{0.33}s & \numprint{0.36}s & \numprint{0.40}s\\
     \hline    
   \end{tabular}
   \caption{Overview of protein-protein-networks.\label{t:ppiov}}
 \end{table}

\section{Conclusion}\label{s:conclusion}

In this paper, we engineered data reduction rules for the minimum multiterminal cut problem with $k$ terminals. 
These reductions are used within a branch-and-reduce framework as well as to boost the performance of an ILP formulation for the problem.
Our experiments a) show that kernelization -- especially our newly introduced kernelization operators -- has a significant impact on both, the branch-and-reduce framework as well as the ILP formulation and b) 
show a clear trade-off: combining reduction rules with the ILP is very fast for problems which have a small kernel but a high cut value and the fixed-parameter tractable branch-and-reduce algorithm is highly efficient when the cut value is small.
Overall, we obtain algorithms that are multiple orders of magnitude faster than the ILP formulation which is de facto standard to solve the problem to optimality.
 Future work includes combining the branching algorithm with integer linear programming so that all occuring subproblems can be solved using the algorithm best suited for their problem properties. 
 We also aim to introduce scalable distributed parallelism.

\renewcommand{\bibname}{\begin{flushleft} References \end{flushleft}}
\bibliographystyle{abbrv}
\bibliography{paper}

\begin{appendix}

  \section{Proofs}
  \label{app:proofs}

In order to prove Lemma~\ref{lem:noi} we first prove the following useful claim:

\begin{claim}\label{t:blocks}
  For any two nodes $u$ and $v$, if $u$ and $v$ belong to different connected components of $G \backslash \cut(G)$, then $\lambda(u,v) \leq \frac{\sum_{i \in \{1,\dots,k\}}\delta (R(t_i))}{4} + \frac{\delta(R(u)) + \delta({R(v)})}{4}$, where $\delta$ are the weighted node degrees in the quotient graph corresponding to $\cut(G)$ and $R(x)$ is the block of a vertex $x$ as defined by the cut $\cut(G)$.
\end{claim}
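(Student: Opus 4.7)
The plan is to prove the bound as the sum of two simpler inequalities on $\lambda(u,v)$, each covering roughly half of the right-hand side.

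My first step would be a structural observation: within the connected component of $G$ being processed (the paper handles connected components of $G$ independently), every connected component of $G \setminus \cut(G)$ contains exactly one terminal. Indeed, if some component $C \subseteq V_j$ were non-terminal, then by connectedness of $G$ some edge must join $C$ to $V_j \setminus C$; being internal to block $V_j$ it cannot lie in $\cut(G)$, yet being an edge leaving the component $C$ of $G \setminus \cut(G)$ it must lie in $\cut(G)$---contradiction. Hence the components of $G \setminus \cut(G)$ coincide with $R(t_1), \ldots, R(t_k)$, and in particular $R(u) = R(t_a)$ and $R(v) = R(t_b)$ for some $a \neq b$.

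Next I would derive two cut-based upper bounds on $\lambda(u,v)$. The cut separating $R(u)$ from its complement separates $u$ from $v$ and has weight $\delta(R(u))$; the symmetric cut around $R(v)$ has weight $\delta(R(v))$. Combining $\lambda(u,v) \leq \delta(R(u))$ and $\lambda(u,v) \leq \delta(R(v))$ gives $2\lambda(u,v) \leq \delta(R(u)) + \delta(R(v))$. On the other hand, $\cut(G)$ itself is a valid $u$-$v$ cut, since removing it disconnects the blocks and $u, v$ lie in different blocks; so $\lambda(u,v) \leq \wgt(G)$. By the structural step, every edge of $\cut(G)$ runs between two distinct terminal blocks and therefore contributes to exactly two of the $\delta(R(t_i))$'s, giving $\sum_{i=1}^{k} \delta(R(t_i)) = 2\wgt(G)$ and thus $2\lambda(u,v) \leq \sum_{i=1}^{k} \delta(R(t_i))$.

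Adding the two inequalities and dividing by $4$ yields the claim. The only nontrivial step is the structural observation; once $\sum_i \delta(R(t_i)) = 2\wgt(G)$ is available, the rest is immediate. I expect no real obstacle, though one must be careful about the connected-component assumption: on a disconnected $G$ the identity $\sum_i \delta(R(t_i)) = 2\wgt(G)$ can fail whenever a non-terminal connected piece is present, which is precisely why the algorithm reduces to connected components before invoking such bounds.
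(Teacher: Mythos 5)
Your proof is correct but takes a genuinely different and more elementary route than the paper's. The paper passes to the quotient graph $G_R$ obtained by contracting each block $R(t_i)$, takes a minimum $u$-$v$ cut $S(u,v)$ there, and argues that every quotient vertex other than $R(u)$ and $R(v)$ has at most half of its weighted degree in $S(u,v)$ (otherwise moving it across the cut would improve it); summing degrees over the quotient vertices then yields the bound. You instead exhibit three explicit $u$-$v$ separators --- the boundaries of $R(u)$ and of $R(v)$, and $\cut(G)$ itself --- obtaining $2\lambda(u,v)\leq\delta(R(u))+\delta(R(v))$ and $2\lambda(u,v)\leq 2\wgt(G)=\sum_{i}\delta(R(t_i))$, and average the two. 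This avoids the exchange argument entirely, and it has the virtue of making explicit the structural fact (every component of $G\setminus\cut(G)$ is a terminal block, so each cut edge is counted exactly twice in $\sum_{i}\delta(R(t_i))$) on which the paper's proof also silently relies when it identifies the vertices of $G_R$ with the $R(t_i)$. One small repair is needed in your justification of that fact: connectedness of $G$ only yields an edge from a putative non-terminal component $C\subseteq V_j$ to $V\setminus C$, not to $V_j\setminus C$; indeed no edge from $C$ to $V_j\setminus C$ can exist, since such an edge would be a non-cut edge leaving the component $C$ of $G\setminus\cut(G)$. The correct contradiction is with optimality rather than connectedness: all edges leaving $C$ are then cut edges into other blocks, and reassigning $C$ to any neighboring block strictly decreases the cut weight because edge weights are positive. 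With that patch your argument is complete.
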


\begin{proof}
  Let $G_R$ be the contracted graph where every block $R(t_i)$ in $G$ is contracted into a single vertex and let $|S(u,v)|$ be a minimum $u$-$v$-cut in $G_R$. 
  By definition of the minimum cut $\lambda(u,v)$, $\lambda(u,v) \leq |S(u,v)|$. 
  Slightly abusing the notation we denote by $R(t_i)$ the vertex of $G_R$ that results from contracting the block $R(t_i)$.
  
  For every vertex $w \in G_R$ that does not represent a block that contains either $u$ or $v$, at most $\frac{\text{deg}(w)}{2}$ edges are in $|S(u,v)|$. This follows directly from the assumption that $|S(u,v)|$ is minimal. If more than $\frac{\text{deg}(w)}{2}$ edges incident to $w$ are in $|S(u,v)|$, moving $w$ to the other side of the cut would give a better cut. Thus, at most half of the edges incident to $w$ are in $|S(u,v)|$. 
  
  We can not make this argument for the blocks containing $u$ and $v$, as potentially all edges incident to their blocks could be in the minimum multiterminal cut. Thus, $2 \cdot |S(u,v)| \leq \frac{\sum_{i \in \{1,\dots,k\}} \delta(R(t_i))}{2} + \frac{\delta(R(u))}{2} + \frac{\delta(R(v))}{2}$. The factor $2$ on the left side is caused by the fact that every edge is incident to two blocks. As we do not know the multiterminal cut $S$, we need to assume that they could be the blocks with the largest cuts $\delta(R(t_i))$. Dividing each side by $2$ finishes the proof.
\end{proof}

\begin{claim}\label{t:wgt}
  For any two nodes $u$ and $v$, if $u$ and $v$ belong to different connected components of $G \backslash \cut(G)$, then $\lambda(u,v) + \frac{\sum_{i \in \{1,\dots,k\}} \delta(R(t_i))}{4} \leq \wgt$.
\end{claim}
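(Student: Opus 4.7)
My plan is to derive the stated inequality as a short consequence of Claim~\ref{t:blocks} together with an elementary counting identity. The first step is to establish that
\[
\sum_{i=1}^{k} \delta(R(t_i)) \;=\; 2\,\wgt,
\]
which holds because every edge of $\cut(G)$ has its two endpoints in two distinct blocks of the optimal partition, and hence contributes its weight exactly once to each of the corresponding two $\delta$-terms in the sum. Summing the boundary contributions over all cut edges therefore double-counts $w(\cut(G)) = \wgt$, giving the identity. This converts the term $\tfrac14\sum_i \delta(R(t_i))$ that appears in the target into $\tfrac12\wgt$.

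The second step is to invoke the hypothesis that $u$ and $v$ lie in different connected components of $G\setminus\cut(G)$: this means there are distinct indices $j \neq \ell$ with $R(u) = R(t_j)$ and $R(v) = R(t_\ell)$, so
\[
\delta(R(u)) + \delta(R(v)) \;=\; 2\,\wgt \;-\; \sum_{i \in \{1,\dots,k\}\setminus\{j,\ell\}} \delta(R(t_i)).
\]
Plugging this identity and the one from the first step into the upper bound of Claim~\ref{t:blocks} yields
\[
\lambda(u,v) \;\leq\; \frac{2\wgt}{4} \;+\; \frac{2\wgt \;-\; \sum_{i \notin \{j,\ell\}}\delta(R(t_i))}{4} \;=\; \wgt \;-\; \frac{\sum_{i\notin\{j,\ell\}}\delta(R(t_i))}{4},
\]
and rearranging gives an inequality of the form $\lambda(u,v) + \tfrac14 \sum \delta(R(t_i)) \leq \wgt$, with the sum ranging over the block indices that do not contain $u$ or $v$.

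The main obstacle is the bookkeeping of the two special blocks $R(u)$ and $R(v)$: Claim~\ref{t:blocks} treats these two asymmetrically, since the half-degree minimality argument fails for the blocks on which $u$ and $v$ sit, and one has to keep careful track of the residual contribution $\tfrac14(\delta(R(u)) + \delta(R(v)))$ that appears there. The clean algebraic cancellation only happens because that residual is exactly the portion of $\tfrac12\wgt$ that is contributed by $R(t_j)$ and $R(t_\ell)$, so expressing $\delta(R(u)) + \delta(R(v))$ via the global identity from the first step is the pivot of the proof. I would then chain this into Lemma~\ref{lem:noi} by replacing each unknown $\delta(R(t_i))$ with its computable lower bound $\lambda(G,t_i,T\setminus\{t_i\})$ and accounting for the unknown identities of $j$ and $\ell$ by pessimistically omitting the two largest isolating cut values — which is precisely the role of the $\max_2$ set appearing in Lemma~\ref{lem:noi}.
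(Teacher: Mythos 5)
Your algebra is sound, but notice that what you actually derive is not Claim~\ref{t:wgt} as stated: your final inequality is $\lambda(u,v) + \tfrac14\sum_{i\notin\{j,\ell\}}\delta(R(t_i)) \leq \wgt$, with the two blocks containing $u$ and $v$ \emph{removed} from the sum, whereas the claim sums over all $k$ blocks. The two statements differ by exactly the term $\tfrac14\bigl(\delta(R(u))+\delta(R(v))\bigr)$ that you were careful to track, and this gap cannot be closed: the claim as stated is false. Take $G$ to be a triangle on three terminals with unit edge weights; then $\wgt=3$, every $\delta(R(t_i))=2$, and $\lambda(t_1,t_2)=2$, so the claimed inequality would read $2+\tfrac{6}{4}\leq 3$. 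The paper's own two-line proof asserts that Claim~\ref{t:blocks} yields $\lambda(u,v)+\tfrac14\sum_i\delta(R(t_i))\leq\tfrac12\sum_i\delta(R(t_i))$, which amounts to silently discarding the $\tfrac14\bigl(\delta(R(u))+\delta(R(v))\bigr)$ term of Claim~\ref{t:blocks} --- precisely the bookkeeping error your derivation avoids.

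The good news is that your weaker inequality is the one that is actually needed downstream. Since $\lambda(G,t_i,T\setminus\{t_i\})\leq\delta(R(t_i))$ termwise, and since removing the two largest isolating-cut values shrinks the sum at least as much as removing the two (unknown) indices $j,\ell$, your inequality chains directly into $\lambda(u,v)+\tfrac14\sum_{i\notin\max_2}\lambda(G,t_i,T\setminus\{t_i\})\leq\wgt$, which is Lemma~\ref{lem:noi}; the \texttt{HighConnectivity} reduction it justifies is therefore unaffected. If you want to keep an intermediate claim in your write-up, state it with the sum ranging over $\{1,\dots,k\}\setminus\{j,\ell\}$ (or, symmetrically, over all indices minus the two largest $\delta$ values), not over all of $\{1,\dots,k\}$.
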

\begin{proof}
  Using Claim~\ref{t:blocks} we know that $\lambda(u,v) + \frac{\sum_{i \in \{1,\dots,k\}} \delta(R(t_i))}{4} \leq \frac{\sum_{i \in \{1,\dots,k\}} \delta(R(t_i))}{2}$. By definition of $\delta$, $\frac{\sum_{i \in \{1,\dots,k\}} \delta(R(t_i))}{2} = \wgt(G)$.
\end{proof}

We now use Claims~\ref{t:blocks}~and~\ref{t:wgt} to prove Lemma~\ref{lem:noi}.
\begin{proof}
  %Using Claim~\ref{t:blocks} we know that $\lambda(u,v) \leq \frac{\sum_{i \in \{1,\dots,k\}} \delta(t_i)}{4} + \frac{\sum_{\max_2} \delta(t_i)}{4}$ for $u$ and $v$ in different blocks. 
  Let vertices $u$ and $v$ be in different blocks. Then     
  $\lambda(u,v)+\frac{\sum_{i \in \{1,\dots,t\}\backslash \max_2} \lambda(G,t_i,T\backslash\{t_i\})}{4} \leq$\\ 
  $\lambda(u,v)+\frac{\sum_{i \in \{1,\dots,t\}\backslash \max_2} \delta(R(t_i))}{4} \leq \frac{\sum_{i \in \{1,\dots,t\}\backslash \max_2} \delta(R(t_i))}{2} = \wgt(G)$.

  The first inequality follows from the fact that $\lambda$ is per definition the minimal cut separating $t$ from $T\backslash\{t_i\}$ and thus $\lambda(G,t_i,T\backslash\{t_i\}) \leq \delta(R(t_i))$.

  Thus, we know that if $\lambda(u,v)+\frac{\sum_{i \in \{1,\dots,t\}\backslash \max_2} \lambda(G,t_i,T\backslash\{t_i\})}{4} > \wgt(G)$, $u$ and $v$ are in the same block and the edge connecting them can be safely contracted.
\end{proof}
 
    \section{Additional Figures}
    \label{app:figures}
    \begin{figure}[!h]
      \centering
      \includegraphics[width=.5\linewidth]{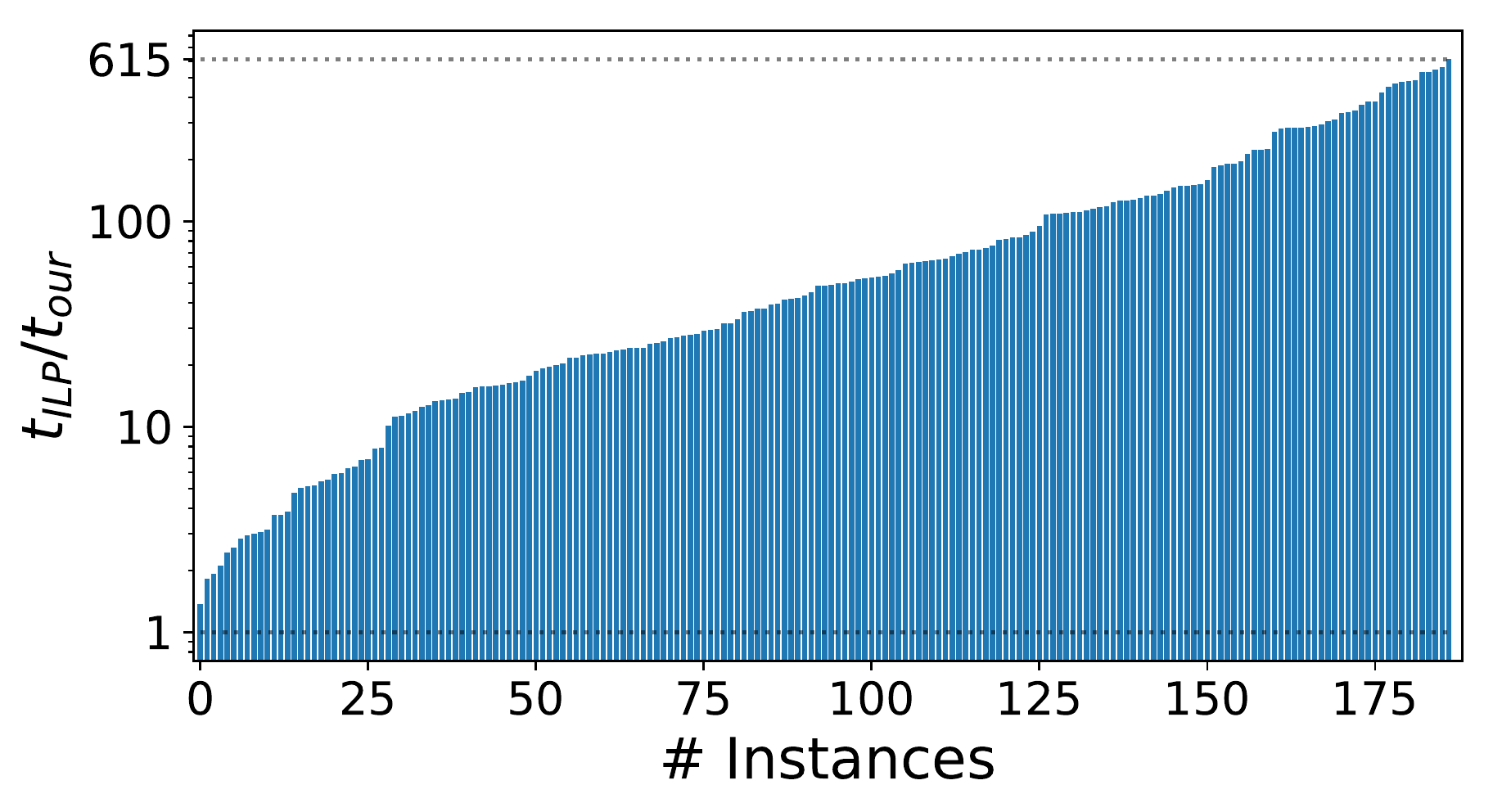}
      \caption{Speedup of \texttt{Kernel+ILP} to ILP}
      \label{fig:nbtoilp}
    \end{figure}
  
    \begin{figure}[!ht]
      \centering
      \includegraphics[width=.5\linewidth]{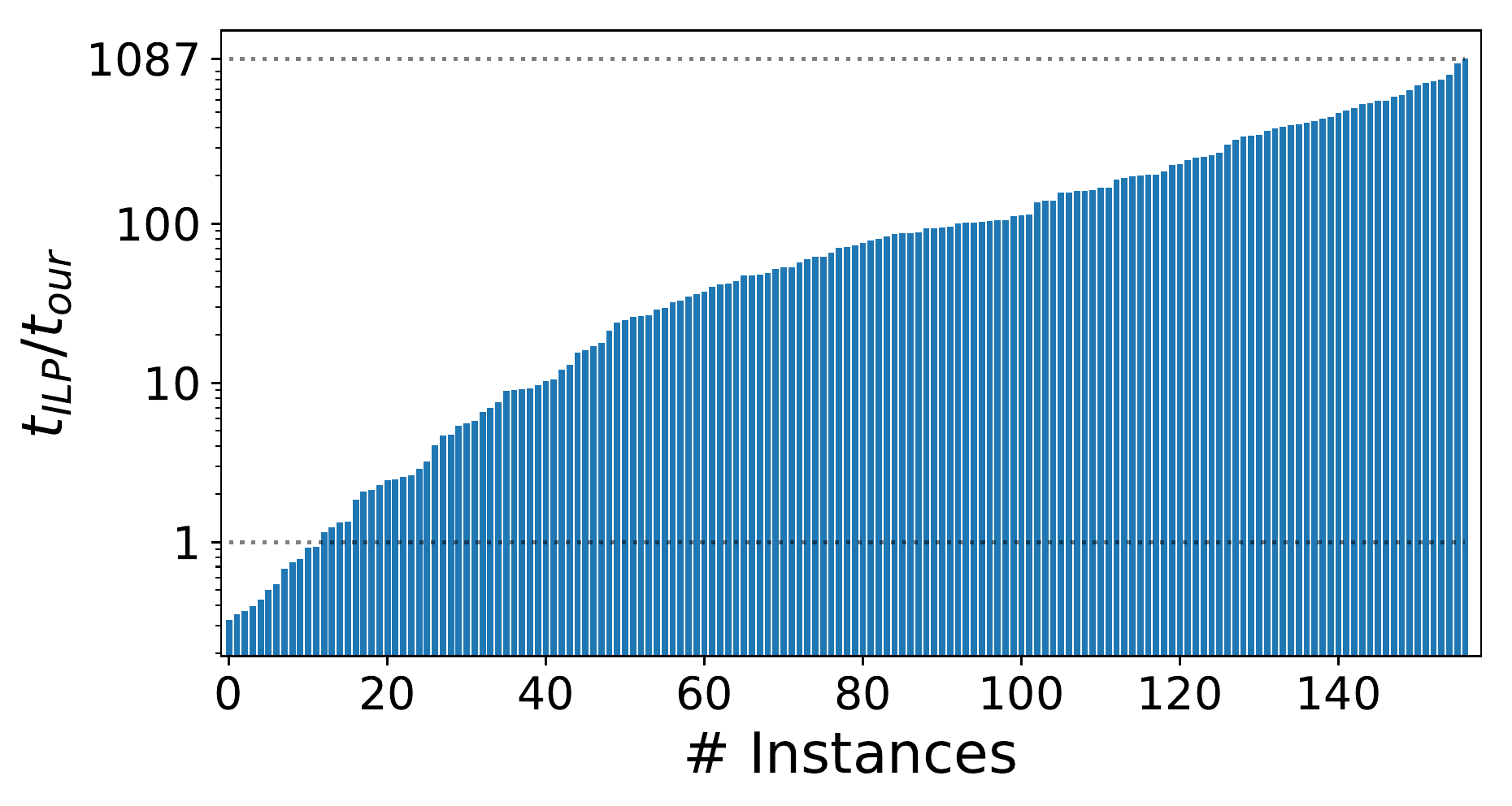}
      \caption{Speedup of avg. branch-and-reduce to ILP}
      \label{fig:alltoilp}
    \end{figure}

\end{appendix}

\end{document}